\newcommand{\mmdef}{\mbox{$\;\stackrel{\textrm{\tiny def}}{=}\;$}}
\newcommand{\st}{\; \mid \;}
  \definecolor{shadecolor}{rgb}{1,0.99,0.9}
\newcommand{\mykeyword}[1]{\textsf{\upshape #1}}
\newcommand{\truek}{\mykeyword{true}}
\newcommand{\ENCan}[1]{\langle #1 \rangle}
\newcommand{\participant}[1]{\mathtt{#1}}
\newcommand{\typevar}{\mathbf{t}}
\newcommand{\branch}{\&}
\newcommand{\typeconst}[1]{\mykeyword{#1}}
\newcommand{\End}{\typeconst{end}}
\newcommand{\TO}[2]{\participant{#1}\to\participant{#2}}
\newcommand{\rcdt}{\{l_i\colon T_i\}_{i\in I}}
\newcommand{\brancht}[1][\alpha]{\branch\rcdt}
\newcommand{\grmeq}{\; ::= \;}
\newcommand{\grmor}{\; \mid \;}
\newcommand{\VEC}[1]{\vec{#1}}
\newif\ifny\nytrue
\newif\ifvv\vvtrue
\newif\ifkohei\koheitrue
\newif\ifmarco\marcotrue
\def\fps@figure{tp}      % Top, or separate page.
\def\fps@table{tp}
\newcommand{\ENTAILS}{\supset}
\newcommand{\DRECA}[4]%
{\mu{\typevar}\textcolor{black}{\ENCan{#2}({#1})\{#4\}}.{#3}}
\newcommand{\GA}{\mathcal{G}}
\newcommand{\ENTAILSb}{\ENTAILS}
\newcommand{\ptp}[1]{{\participant{#1}}}
\newcounter{remcounter}
\newtheorem{REM}[remcounter]{Remark}
\newtheorem{definition}[remcounter]{Definition}
\newtheorem{EX}[remcounter]{Example}
\newtheorem{fact}[remcounter]{Fact}
\newtheorem{PRO}[remcounter]{Proposition}
\newtheorem{theorem}[remcounter]{Theorem}
\newcommand{\treeroot}[1]{{#1}^\bullet}
\newcommand{\fail}[2]{#1\!\lightning\!_{#2}}
\newcommand{\nodes}{\mathcal{N}}
\newcommand{\trees}{\mathcal{T}}
\newcommand{\variables}{\mathcal{V}}
\newcommand{\predicates}{\Psi}
\newcommand{\participants}{\mathcal{P}}
\newcommand{\HS}{\mathtt{\overline{HS}}}
\newcommand{\HSProblem}{\mathtt{\overline{HS}}}
\newcommand{\responsible}[2]{\mathtt{resp}_{#1}(#2)}
\newcommand{\weaken}[1]{\mathtt{strengthen(#1)}}
\newcommand{\propagate}[1]{\mathtt{propagate(#1)}}
\newcommand{\nada}{\bot}
\newcommand{\methodoneE}{\Phi_1}
\newcommand{\methodtwoE}{\Phi_2}
\newcommand{\methodone}[1]{\Phi_1(#1)}
\newcommand{\methodtwo}[1]{\Phi_2(#1)}
\newcommand{\IOchain}{\;\prec\;}
\newcommand{\IOchainPed}{\prec_{{\scriptscriptstyle T}}}
\newcommand{\prop}[3]{\mathtt{P}_{#1}(#2,#3)}
\newcommand{\TSProblemE}{\mathtt{\overline{TS}}}
\newcommand{\TS}[1]{\mathtt{TS}(#1)}
\newcommand{\TSnodeE}{\mathtt{TSnode}}
\newcommand{\TSnode}[2]{\TSnodeE_#1(#2)}
\newcommand{\TSProblem}[1]{\mathtt{\overline{TS}}(#1)}
\newcommand{\VARS}[1]{\text{{\it var}}(#1)}
\newcommand{\rewriteE}{\mathtt{rewrite}}
\newcommand{\rewrite}[2]{\rewriteE(#1,#2)}
\newcommand{\conflict}{\emph{is in conflict }}
\newcommand{\partitionE}{\mathtt{split}}
\newcommand{\partition}[4]{\partitionE_{#1}\left(#2,#3,#4\right)}
\newcommand{\buildE}{\mathtt{build}}
\newcommand{\build}[3]{\buildE_{#1}\left(#2,#3\right)}
\newcommand{\liftPredE}{\Phi_3}
\newcommand{\liftPred}[1]{\liftPredE(#1)}
\newcommand{\TSresolveE}{\mathtt{TSres}}
\newcommand{\TSresolve}[2]{\TSresolveE_{#1}(#2)}
\newcommand{\carry}[1]{\carryE(#1)}
\newcommand{\carryE}{\mathit{var}}
\newcommand{\carryPed}[2]{\carryE_{#1}(#2)}
\newcommand{\assert}[1]{\assertE(#1)}
\newcommand{\assertE}{\mathit{cst}}
\newcommand{\assertPed}[2]{\assertE_{#1}(#2)}
\newcommand{\parentE}{\mathit{parent}}
\newcommand{\parentPed}[2]{\parentE_{#1}(#2)}
\newcommand{\predE}{\mathtt{PRED}}
\newcommand{\predPed}[2]{\predE_{#1}(#2)}
\newcommand{\spartyE}{\mathit{snd}}
\newcommand{\spartyPed}[2]{\spartyE_{#1}(#2)}
\newcommand{\rpartyPed}[2]{\rpartyE_{#1}(#2)}
\newcommand{\rpartyE}{\mathit{rcv}}
\newcommand{\sr}[2]{\ptp{#1}\;{\rightarrow}\;\ptp{#2}:}
\newcommand{\knows}[2]{\mathtt{knows}_{\ptp{#1}}(#2)}
\newcommand{\varHS}[2]{\mathtt{varHS}_{\ptp{#1}}(#2)}
\newcommand{\uppathPed}[2]{#1 \! \! \uparrow_{#2}}
\newcommand{\values}[4]{\sr{#1}{#2}\{#3 \st #4\}}
\newcommand{\branching}[6]{\sr{#1}{#2}\!\!\left\lgroup\{#6\}#3:#4\right\rgroup_{\!\!\!\!\!#5}\!\!}
\newcommand{\branchingM}[2]{\sr{#1}{#2}}
\newcommand{\recursion}[3]{\mu\ \typevar\ \ENCan{#1}{\{#2 \st #3\}}}
\newcommand{\reccall}[1]{\typevar \ENCan{#1}}
\newcommand{\tree}[1]{\mathtt{T}(#1)}
\newcommand{\assertion}[1]{\mathtt{A}(#1)}
\newcommand{\lab}[1]{\underline{#1}}
\numberwithin{equation}{section}
\title{Amending Contracts for Choreographies\thanks{This work has been
    supported by the project Leverhulme Trust Award ”Tracing
    Networks”.  }} \author{Laura Bocchi \qquad Julien Lange \qquad
  Emilio Tuosto \institute{Department of Computer Science, University
    of Leicester, UK}
%\email{ \{bocchi, jlange, emilio \}@mcs.le.ac.uk}
\email{ bocchi@mcs.le.ac.uk \qquad jlange@mcs.le.ac.uk \qquad emilio@mcs.le.ac.uk}
}
\begin{document}

\maketitle
\begin{abstract}
%   Choreographies are high level models which describe conversations
%   among distributed parties from a global perspective. The precise
%   obligations and guarantees of each participant should be stipulated
%   trough precise contracts.
  Distributed interactions can be suitably designed in terms of
  \emph{choreographies}.
  Such abstractions can be thought of as global descriptions of the
  coordination of several distributed parties.
  \emph{Global assertions} define contracts for choreographies by
  annotating multiparty session types with logical formulae to
  validate the content of the exchanged messages.
  % , on the branches to follow and invariants on recursions.
 %
  The introduction of such constraints is a critical design issue as
  it may be hard to specify contracts that allow each party to be able to
  progress without violating the contract.
  %
%   This may be due to a too strong constraint (e.g., unsatisfiable due
%   to some values which were allowed in previous interactions) or to a
%   too small knowledge of a participant.
  %
  In this paper, we propose three methods that automatically correct
  inconsistent global assertions.
  The methods are compared by discussing their applicability and the
  relationships between the amended global assertions and the original
  (inconsistent) ones.
\end{abstract}

\section{Introduction}\label{sec:intro}

Choreographies are high level models that describe the conversations 
among distributed parties from a global perspective.
\emph{Global types}~\cite{mps} and \emph{global
  assertions}~\cite{bhty10} provide an effective methodology for the
design of distributed choreographies (as e.g.,
in~\cite{carbone.honda.yoshida:esop07}) by allowing static checking of
a number of properties such as deadlock freedom and session fidelity.

Intuitively, global types establish the interaction pattern for the
harmonious coordination of distributed parties while global assertions
combine global types with logic to feature
\emph{design-by-contract}~\cite{scoop}.
Basically, global assertions decorate global types with logical
formulae (\emph{predicates}) that constrain interactions, declaring
senders' obligations and receivers' requirements on exchanged data and on the choice of 
the branches to follow.
This adds fine-grained constraints to the specification of the interaction structure. 
For instance, the global assertion
\begin{equation}\label{eq:ex0}
  \begin{array}{l}
    \values{Alice}{Bob}{a}{a>0}.\\
    \values{Bob}{Carol}{b}{b>a}
  \end{array}
\end{equation}
describes a protocol with three participants, $\ptp{Alice}$,
$\ptp{Bob}$, and $\ptp{Carol}$, who agree on a ``contract''
constraining the \emph{interaction variables} $a$ and $b$.
The contract stipulates that ($i$) $\ptp{Alice}$ has to send a
positive value to $\ptp{Bob}$ in the first interaction, and that
($ii$) $\ptp{Bob}$ is obliged to send $\ptp{Carol}$ a value strictly
greater than the one fixed for $a$ in the first interaction.
Notice that $\ptp{Bob}$ can fulfill his pledge (i.e., the assertion
$b > a$ in the second interaction above) only after he has received
the value $a$ from $\ptp{Alice}$.

Once designed, a global assertion $\GA$ is \emph{projected} on
\emph{endpoint assertions} that are local types -- modelling the behaviour
of a specific participant -- constrained according to the predicates of $\GA$.
For instance, the projection for $\ptp{Alice}$ in the
example~(\ref{eq:ex0}) above is an endpoint assertion prescribing that $\ptp{Alice}$ has to
send a positive value to $\ptp{Bob}$.
Endpoint assertions can be used for static validation of the actual processes
implementing one or more roles in a choreography represented by $\GA$, and/or to synthesise
monitor processes for run-time checking/enforcement.

The methodology described above can be applied only when global
assertions are \emph{well-asserted}~\cite{bhty10}, namely when global
assertions obey two precise design principles:
\emph{history-sensitivity} (HS for short) and \emph{temporal
  satisfiability} (TS for short).
Informally, HS demands that a party having an obligation on a
predicate has enough information for choosing a set of values that
guarantees it.
Instead, TS requires that the values sent in each interaction do not
make predicates of future interactions unsatisfiable.

The main motivation of our interest in HS and TS is that, in
global assertions, they are the technical counterparts of the fundamental
coordination issue that could be summarized in the slogan
``who does what and when does (s)he do it''.
In fact, HS pertains to \emph{when} variables are constrained and
\emph{who} constrains them, while TS pertains to \emph{which} values
variables take.
The contracts specified in global assertions are, on the one hand, ``global''
as they pertain to the whole choreography while, on the other hand, they are
also ``local'' in (at least) two aspects.
The first is that they assign responsibilities to participants
(\emph{who}) at definite moments of the computation (\emph{when}).
The second aspect is that the values assigned to variables are
critical because either one could over-constrain variables fixed in
the past or over-restrict the range of those assigned in the future
(\emph{which}).
These conditions (especially TS) are rather crucial as global assertions
that violate them may be infeasible or fallacious.
For instance, if the predicate for $\ptp{Bob}$ in the second
interaction in~(\ref{eq:ex0}) were $3 > b > a$ then $\ptp{Bob}$ could
not fulfill his contract if $\ptp{Alice}$ had fixed the value $2$ for
$a$ in the first interaction.

Guaranteeing HS and TS is often non-trivial, and this burden is on
the software architect; using tools like the ones described
in~\cite{LT10}, one only highlights the problems but does not help to fix
them.
HS and TS are global semantic properties that may
be hard to achieve.
Namely, TS requires to trace back for ``under-constrained'' interactions (i.e., which allow values causing future predicates to be unsatisfiable)
and re-distribute there the unsatisfiable constraints.

\paragraph{Contributions}
We show a few techniques that help software architects to amend
\emph{global assertions} during the design of distributed
choreographies.
The preliminary notions used in the rest of the paper are given in
\S~\ref{sec:preliminaries}.
In \S~\ref{sec:HS} we give two algorithms which, if applicable,
automatically fix HS in global assertions; the first algorithm
strengthens a predicate while the second one is based on variable
propagation.
In \S~\ref{sec:TS} we give an algorithm which, if possible, moves
predicates up in the global assertion in order to remove TS violations.
%
%Due to lack of space, the proofs of \S~\ref{sec:HS} and \S~\ref{sec:TS} are
%omitted.
%
\S~\ref{sec:methodo} outlines a methodology based on the three algorithms.
Conclusions and future work are discussed in \S~\ref{sec:conc}.
%

%%% Local Variables: 
%%% mode: latex
%%% TeX-master: "main"
%%% End: 

\section{Preliminaries}
\label{sec:preliminaries}
Let $\participants$ (ranged over by $\ptp{p}, \ptp q,
\ptp{s},\ptp{r},\ldots$) and $\variables$ (ranged over by
$u,v,x,y,\ldots$) be two infinitely countable sets of identifiers.
We assume $\participants \cap \variables = \varnothing$ and call
their elements \emph{participants} and \emph{interaction variables}, respectively.
%for \emph{participants}
%and \emph{interaction variables}, respectively.
% 
Hereafter, $\VEC{\_}$ represents a list of some elements (for
instance, $\VEC v$ is a list of interaction variables); the
concatenation of $\VEC x$ and $\VEC y$ is denoted by the juxtaposition
$\VEC x \; \VEC y$, and, abusing notation, we confound lists with the
underlying sets of their elements (e.g., $a \in \VEC x$ indicates that
$a$ occurs in the list $\VEC x$).
Also, expressions (ranged over by $e$) include variables in
$\variables$, basic data types (e.g., integers, booleans, etc.), and
usual arithmetic operations/relations; $var(e)$ is the set of (free)
variables in $e$; and, we denote logic implication with the symbol $\ENTAILSb$.

As in~\cite{bhty10}, we parametrise our constructions wrt a logical
language $\Psi$, which we assume to be a decidable fragment of a first-order
logic with expressions and quantifiers on
variables; the set of free interaction variables of $\psi \in \Psi$ is
denoted as $\VARS{\psi}$ and we write $\psi(\VEC v)$ to emphasise that
$\VARS{\psi} \subseteq \VEC v$.

The main ingredients of global assertions are \emph{interactions}, abbreviated $\iota$, which
have the form:
\begin{equation}\label{eq:int}
  \values s r {\VEC{v}}{\psi}
\end{equation}
where $\ptp s, \ptp r \in \participants$ are the \emph{sender} and the
\emph{receiver}, $\VEC v \subseteq \variables$ is a pairwise-distinct
list of variables, and $\psi \in \Psi$.
Variables $\VEC v$ are called \emph{interaction variables} and,
in~(\ref{eq:int}), we say that they are \emph{introduced} by $\ptp s$.
The interaction~(\ref{eq:int}) reads as ``$\ptp{s}$ has to send to $\ptp{r}$ some
values for $\VEC v$ that satisfy $\psi$'' or as ``$\ptp r$
relies that the values fixed by $\ptp s$ for $\VEC v$ satisfy $\psi$''.
For instance,\footnote{For simplicity, we assume the typing of
  variables understood.}
\[
\values s r {v \; w}{ \exists u . v = u \times w }
\]
states that $\ptp{s}$ has the obligation to send $\ptp{r}$ two values
such that the first is a multiple of the second.

\begin{REM}
  In~\cite{bhty10}, interactions specify a channel over which
  participants communicate.
  In~(\ref{eq:int}) we omit channels since they are inconsequential to
  our results (\cite{BettiniCDLDY08LONG} shows that channels can indeed be removed).
\end{REM}
Given $\iota$ as in~(\ref{eq:int}), we define
\[
\spartyE(\iota) \mmdef \ptp{s},
\qquad
\rpartyE(\iota) \mmdef \ptp r,
\qquad
\VARS \iota \mmdef \VEC v,
\qquad \text{and} \qquad
\assertPed{} \iota \mmdef \psi
\]

\medskip

Def.~\ref{def:GA} below is essentially borrowed
from~\cite{bhty10} but for a slightly simplified syntax.
\begin{definition}[Global Assertions]\label{def:GA}
  \emph{Global assertions} are defined by the following productions.
  \[
  \begin{array}{rcl@{\hspace{1cm}}l}
%    \iota & \grmeq &  \values s r {\VEC{v}}{\psi}    
%    \\[0.1mm]
    \GA & \grmeq & \iota .\GA
    & \text{Prefix}
    \\
    & \grmor & \branching s r {l_j}{\GA_j}{j\in J}{\psi_j}
    & \text{Branching}
    \\
    & \grmor & \recursion{\VEC e }{\VEC{{v}}}{\psi}.\GA
    & \text{Recursive definition}
    \\
    & \grmor & \reccall{\VEC{e}}
    & \text{Recursive call}
    \\
    & \grmor & \End   
    & \text{End session}
  \end{array}\]
  where $\psi ,\psi_j \in \predicates$ and
  $l_j$ ranges over a set of labels.
  We let $\GA, \GA',\GA_j$ range over global assertions.
\end{definition}
The first production in Def.~\ref{def:GA} represents an
interaction prefix; interaction variables $\VARS \iota$ are bound in
the continuation of the prefix and in $\assertPed{} \iota$.
The second production allows the selector $\ptp s$ to choose one of
the labels $\{l_j\}_{j\in J}$ and send it to $\ptp{r}$; the choice
of label $l_j$ is guarded by $\psi_j$ (guaranteed by $\ptp{s}$)
and is followed by $G_j$.
The formal parameters $\VEC v \subseteq \variables$ in recursive
definitions\footnote{Variables $\VEC v$ are pairwise distinct and
  their free occurrences in the body of the recursion are bound by the
  recursive definition.} are constrained by the invariant $\psi$ which
must be satisfied at each recursive call (this is guaranteed when the
global assertion satisfies TS).  The initialisation vector $\VEC{e}$
(of the same length as $\VEC v$) specifies the initial values of the
formal parameters.
Recursive calls must be prefix-guarded.

The termination of the session is represented by $\End$
(trailing occurrences are often omitted).
We denote with $\VARS{\GA}$ the set of interaction variables and
recursion parameters in $\GA$.
\begin{REM}
  For simplicity, we assume Barendregt's convention (i.e., bound
  variables are all distinct and they differ from any free variable).
  Moreover, global assertions $\GA$ are \emph{closed}, i.e., each free
  occurrence of $v \in \VARS \GA$ is either preceded by an interaction
  $\iota$ such that $v \in \VARS \iota$ or by a recursive definition
  having $v$ as one of its formal parameters.
\end{REM}
%
% Closeness can be enforced, for instance, by searching for the free
% variables (e.g., using a simplified version of the syntactic checker
% in \cite{bhty10}) and asking the designer to either remove them or to
% bind them with universal quantifiers.

A participant $\ptp{p}$ \emph{knows} a variable $v \in var(\GA)$ if either
\begin{itemize} 
\item there is $\iota$ in $\GA$ such that $v \in \VARS \iota$
  and $\ptp p \in \{\spartyE(\iota), \rpartyE(\iota)\}$
\item or there is a recursive definition $\recursion{\VEC{e_1} e
    \VEC{e_2} }{\VEC{v_1} v \VEC{v_2}}{\psi}.\GA'$ in $\GA$ such that
  $\ptp{p}$ knows all the variables\footnote{Assume that the lenght of
    $\VEC e_i$ and $\VEC{e'}_i$ is the same of $\VEC v_i$ for $i \in
    \{1,2\}$.} in $\VARS e$ and, for each recursive invocation
  $\reccall{\VEC{e'_1} e' \VEC{e'_2}}$ in $\GA'$, $\ptp{p}$ knows all
  variables in $\VARS{e'}$. 
%  ; as in~\cite{bhty10}, each recursion
%  parameter $v$ is supposed to be known to exactly two participants.
\end{itemize}
We denote with $\knows{p}{\GA} \subseteq \VARS{\GA}$ the set of
variables in $\GA$ that $\ptp{p}$ knows.
\begin{EX}\label{ex:1}
Consider the following global assertion
\[\begin{array}{llll}
    \GA_{\text{ex}\ref{ex:1}} &=& \recursion{10}{v}{\psi}.  \\
    && \qquad\values{Alice}{Bob}{v_1}{\psi_1}.\\ 
    &&\qquad \values{Bob}{Carol}{v_2}{\psi_2}. \\
%    && \qquad\values{Carol}{Alice}{v_3}{\psi_3}.\\
%    && \qquad\values{Carol}{Bob}{v_4}{\psi_4}.\\
    && \qquad\typevar\ENCan{v_1}
  \end{array}
  \]
  repeatedly executing a computation where ($i$) $\ptp{Alice}$ sends a
  variable $v_1$ to $\ptp{Bob}$ and ($ii$) $\ptp{Bob}$ sends a
  variable $v_2$ to $\ptp{Carol}$.
    At each step, the invariant $\psi$ must be satisfied, namely at
    the first invocation $\psi[10/v]$ must hold and in all subsequent
    invocations $\psi[v_1/v]$ must hold.
\end{EX}
In $\GA_{\text{ex}\ref{ex:1}}$, $\ptp{Alice}$ knows $v_1$, since she
sends it, while $v_1, v_2 \in \knows{Bob}{\GA_{\text{ex}\ref{ex:1}}}$,
since $\ptp{Bob}$ receives $v_1$ and sends $v_2$, respectively.
$\ptp{Carol}$ knows $v_2$, since she receives it.
Also, $v \in \knows{Alice}{\GA_{\text{ex}\ref{ex:1}}} \cap
\knows{Bob}{\GA_{\text{ex}\ref{ex:1}}}$, since $\ptp{Alice}$ and
$\ptp{Bob}$ know $v_1$, the unique variable in the expression of the
recursive call (and they trivially know all the variables in the
initial expression, i.e. the constant $10$).
However, $\ptp{Carol}$ does not know $v$ since she does not know
$v_1$.

 \medskip

It is convenient to treat global assertions as trees whose nodes are
drawn from a set $\nodes$ (ranged over by $n,n',\ldots$) and
labelled with information on the syntactic categories of
Def.~\ref{def:GA}.
Hereafter, we write $n \in T$ if $n$ is a node of a tree $T$,
$\lab{n}$ to denote the label of $n$, and $\treeroot T$ for the root of
$T$.
\begin{definition}[Assertion Tree]\label{def:tree}
  The \emph{assertion tree} $\tree{\GA}$ of a global assertion $\GA$
  is defined as follows:
  \begin{itemize}
  \item If $\GA = \iota.\GA'$ then $\treeroot{\tree{\GA}}$ has label
    $\iota$ and its unique child is $\treeroot{\tree{\GA'}}$.
  \item If $\GA=\branching{s}{r}{l_j}{\GA_j}{j\in J}{\psi_j}$ then
    $\treeroot{\tree{\GA}}$ has label $\TO{s}{r}$ and its children are
    $\{n_j\}_{j \in J} \subseteq \nodes$ such that, for each $j \in
    J$, $\lab{n_j} = \{\psi_j\}l_j$ and $\treeroot{\tree{\GA_j}}$ is
    the unique child of $n_j$.
  \item If $\GA=\recursion{\VEC{e}}{\VEC{v}}{\psi}.\GA'$ then
    $\treeroot{\tree{\GA}}$ has label
    $\recursion{\VEC{e}}{\VEC{v}}{\psi}$ and its unique child is
    $\treeroot{\tree{\GA'}}$.
  \item If $\GA=\typevar{\ENCan{\VEC{e}}}$ then $\tree{\GA}$ consists
    of one node with label $\typevar{\ENCan{\VEC{e}}}$.
  \item If $\GA=\End$ then $\tree{\GA}$ consists of one node with
    label $\End$.
  \end{itemize}
  We denote the set of assertion trees as $\trees$ and let
  $T,T',\ldots$ range over $\trees$.
\end{definition}
For convenience, given $T\in \trees$, we will use the partial functions
\[
\carryE_T:\nodes\rightarrow2^\variables, \qquad 
\assertE_T:\nodes\rightarrow\predicates, \qquad \text{and} \qquad 
snd_T,rcv_T:\nodes\rightarrow\participants
\]
that are undefined\footnote{We write $f(x) = \nada$ when the
  function $f$ is undefined on $x$.} on $\nodes \setminus \{n \st n
\in T \}$ and defined as follows otherwise:
%
%  If $n \not\in T$ then $\carryPed{T}{n}=\assertPed{T}{n}=\spartyPed{T}{n}=\rpartyPed{T}{n}=\nada$
%
\[\begin{array}{l@{\hspace{1.5cm}}l}\small
\carryPed{T}{n} =
  \begin{cases}
    \VARS{\iota}, & \text{if } \lab{n} = \iota \\
    \emptyset,  & \text{otherwise} 
  \end{cases}
&
\assertPed{T}{n} =
  \begin{cases}
    \psi, & \text{if } \lab{n} = \iota \text{ and } \assertPed{} \iota = \psi, \text{ or }  \lab{n}=\{\psi\}l\\
    \truek, & \text{otherwise}
  \end{cases}
\\[2em]
\spartyPed{T}{n} =
  \begin{cases}
    \spartyE(\iota), & \text{if }\lab{n}=\iota \\
    \ptp{s}, & \text{if }\lab{n}={\ptp{s}}\;\rightarrow\;{\ptp{r}} \\
  \end{cases}
&
\rpartyPed{T}{n} =
  \begin{cases}
    \rpartyE(\iota), & \text{if }\lab{n}=\iota\\
    \ptp{r}, & \text{if }\lab{n}={\ptp{s}}\;\rightarrow\;{\ptp{r}} \\ 
  \end{cases}
\end{array}\]
Moreover, we shall use the following functions:
\begin{itemize}
\item $\parentPed{T}{n}$ returning $\epsilon$ if $n = \treeroot T$,
  the parent of $n$ in $T$ if $n \in T$, and $\nada$ otherwise.
\item $\uppathPed{n}{T}$ returning the path from $\treeroot T$ to $n$
  if $n \in T$, and $\nada$ otherwise.
\end{itemize}

Given $T \in \trees$, let $\assertion T$ be the global assertion
obtained by appending the labels of the nodes in (depth-first)
preorder traversal visit of $T$.
\begin{fact}\label{fact:text}
  $\assertion{\tree{\GA}}=\GA$
\end{fact}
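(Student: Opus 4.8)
The statement $\assertion{\tree{\GA}} = \GA$ asserts that the two translations---from global assertions to assertion trees (Def.~\ref{def:tree}) and back, via preorder traversal (the definition of $\assertion{\cdot}$)---are mutually inverse on the nose. The plan is to prove it by structural induction on $\GA$, following the five productions of Def.~\ref{def:GA}. In each case I unfold $\tree{\GA}$ according to the corresponding clause of Def.~\ref{def:tree}, then read off the preorder traversal of that tree, and check that concatenating the node labels (possibly combined with the labels of the subtrees, which are handled by the induction hypothesis) reproduces $\GA$ syntactically.

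\textbf{Key steps.} First, the base cases: if $\GA = \End$ then $\tree{\GA}$ is a single node labelled $\End$, whose preorder traversal is just that node, so $\assertion{\tree{\GA}} = \End = \GA$; the case $\GA = \reccall{\VEC e}$ is identical with label $\typevar\ENCan{\VEC e}$. Second, the prefix case $\GA = \iota.\GA'$: the root of $\tree{\GA}$ is labelled $\iota$ and its unique child is $\treeroot{\tree{\GA'}}$, so the preorder traversal lists $\iota$ followed by the preorder traversal of $\tree{\GA'}$; hence $\assertion{\tree{\GA}} = \iota.\assertion{\tree{\GA'}} = \iota.\GA'$ by the induction hypothesis, which is exactly $\GA$. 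Third, the recursive-definition case $\GA = \recursion{\VEC e}{\VEC v}{\psi}.\GA'$ is handled just like the prefix case, with the root labelled $\recursion{\VEC e}{\VEC v}{\psi}$. Fourth, the branching case $\GA = \branching s r {l_j}{\GA_j}{j\in J}{\psi_j}$: here the root has label $\TO s r$, its children $\{n_j\}_{j\in J}$ carry labels $\{\psi_j\}l_j$, and each $n_j$ has the single child $\treeroot{\tree{\GA_j}}$; the preorder traversal therefore yields $\TO s r$ followed, for each $j$, by $\{\psi_j\}l_j$ and then the traversal of $\tree{\GA_j}$, and assembling these with the induction hypothesis $\assertion{\tree{\GA_j}} = \GA_j$ gives back the branching term.

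\textbf{Main obstacle.} The only genuinely delicate point is the branching case, and it is notational rather than mathematical: one must fix a convention for how the ``labels of the nodes'' of a branching node are read back as a piece of syntax---i.e.\ that visiting the root $\TO s r$ and then the chain $n_j$, $\treeroot{\tree{\GA_j}}$ for each $j\in J$ is, by definition of $\assertion{\cdot}$, reassembled into $\branching s r {l_j}{\GA_j}{j\in J}{\psi_j}$ with the correct bracketing and indexing over $J$. Once that reading convention is spelled out (it is implicit in the phrase ``appending the labels of the nodes in preorder traversal''), the equality is immediate. I would therefore state explicitly at the start that $\assertion{\cdot}$ inverts the clauses of Def.~\ref{def:tree} clause-by-clause, and then the induction is a routine verification; I expect the whole argument to take only a few lines, or to be dispatched as ``by a straightforward induction on $\GA$, inspecting each clause of Def.~\ref{def:tree}.''
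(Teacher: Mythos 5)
Your proof is correct, and it is exactly the routine verification the paper has in mind: the paper states Fact~\ref{fact:text} without proof, treating it (together with Fact~\ref{fact:assert}) as an immediate consequence of how $\tree{\cdot}$ and $\assertion{\cdot}$ are defined. Your structural induction on $\GA$, unfolding Def.~\ref{def:tree} clause by clause and noting the reading convention for branching nodes in the preorder traversal, is precisely the argument left implicit there.
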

Fact~\ref{fact:text} allows us to extend $\knows{p}{\_}$ to $\trees$
by $\knows{p}{T} \mmdef \knows p {\assertion T}$.

\begin{fact}\label{fact:assert}
If $T\in\trees$ then $\tree{\assertion{T}}=T$
\end{fact}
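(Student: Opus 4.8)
The plan is to prove Fact~\ref{fact:assert} by structural induction on the global assertion $\GA$, exploiting Fact~\ref{fact:text} together with the fact that $\assertion T$ is built by a preorder traversal whose inverse is essentially $\tree{-}$. More precisely, since every $T \in \trees$ arises as $\tree{\GA}$ for some $\GA$ — indeed, $T = \tree{\assertion T}$ would follow immediately once we know $\assertion T$ is a well-formed global assertion and that $\tree{-}$ is injective — the cleanest route is: first observe that $\assertion{-}$ and $\tree{-}$ are inverse on the syntactic side via Fact~\ref{fact:text}, then argue that every assertion tree is in the image of $\tree{-}$, and conclude.

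Concretely, I would proceed as follows. First I would show, by induction on the structure of $\GA$, that $\tree{-}$ is injective: if $\tree{\GA_1} = \tree{\GA_2}$ then $\GA_1 = \GA_2$. This is routine, since the label of the root of $\tree{\GA}$ determines the top-level syntactic constructor of $\GA$ (a prefix label $\iota$, a branching label $\TO{s}{r}$ with children labelled $\{\psi_j\}l_j$, a recursion label $\recursion{\VEC e}{\VEC v}{\psi}$, a call label $\typevar\ENCan{\VEC e}$, or $\End$), and the subtrees hanging below determine the continuations, to which the induction hypothesis applies. Next, given an arbitrary $T \in \trees$, by Def.~\ref{def:tree} $T$ is of the form $\tree{\GA}$ for some global assertion $\GA$ (this is how $\trees$ is populated). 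Then Fact~\ref{fact:text} gives $\assertion T = \assertion{\tree \GA} = \GA$, whence $\tree{\assertion T} = \tree \GA = T$. Alternatively, and perhaps more transparently for the reader, one can skip injectivity entirely: take $T = \tree \GA$, apply Fact~\ref{fact:text} to get $\assertion T = \GA$, and substitute to obtain $\tree{\assertion T} = \tree \GA = T$ directly.

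I would present the direct version as the main argument and mention injectivity only if one wants to be pedantic about what "$T \in \trees$" means. The one genuine point requiring care — and the place I expect the only real friction — is justifying that \emph{every} element of $\trees$ is of the form $\tree \GA$: the set $\trees$ is introduced in Def.~\ref{def:tree} as "the set of assertion trees", so one must read this as precisely the image of $\tree{-}$, and not, say, the set of all finite node-labelled trees with labels drawn from the syntactic categories. Under the former (intended) reading the statement is essentially immediate from Fact~\ref{fact:text}; under the latter reading it would be false (e.g.\ a tree whose root has a prefix label but two children is not $\tree \GA$ for any $\GA$), so it is worth a one-line remark fixing the convention. Everything else is a mechanical unfolding of the definitions of $\tree{-}$ and $\assertion{-}$ on each of the five productions of Def.~\ref{def:GA}.
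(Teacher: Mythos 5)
Your direct argument is correct and matches the paper's own (implicit) justification: the paper states this fact without proof, treating it, together with Fact~\ref{fact:text}, as an immediate consequence of reading $\trees$ as exactly the image of $\tree{\_}$, which is precisely the convention you make explicit. Your closing remark that the statement would fail under a broader reading of $\trees$ (arbitrary labelled trees) is a sensible clarification rather than a gap, and the injectivity detour is indeed unnecessary.
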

Facts~\ref{fact:text} and~\ref{fact:assert} basically induce an
isomorphism between global assertions and their parsing trees.
% In the following, we present three functions $\methodoneE$, $\methodtwoE$ and $\liftPredE$ which take an assertion $\GA$ as parameter and  manipulate it in order to correct $\HSE$ and $\TSE$ problems. Each function returns either
% \begin{itemize}
% \item a new assertion $\GA'$ if the method was able to remove all the instances of the problem from $\GA$, or
% \item $\fail{\GA'}{n}$ if $\GA'$ is an amended assertion still having a problem in at least one node $n \in \tree{\GA'}$. 
% \end{itemize}

%%% Local Variables: 
%%% mode: latex
%%% TeX-master: "main"
%%% End: 

% LocalWords:  satisfiability

%\section{Recovering History Sensitivity}
\section{Towards a Better Past}
\label{sec:HS}
In a distributed choreography, parties have to make local choices on
the communicated values; such choices impact on the graceful coordination
of the distributed parties.
It is therefore crucial that the responsible party has ``enough
information'' to commit to an ``appropriate'' local choice, in each
point of the choreography.
For global assertions, this distills into \emph{history sensitivity} (HS), a
property defined in~\cite{bhty10} demanding each sender/selector
to know all the variables involved in the predicates (s)he must guarantee.
We illustrate HS with Example~\ref{ex:hs} below.
\begin{EX}\label{ex:hs}
  The global assertion $\GA_{\text{ex}\ref{ex:hs}}$ violates HS.
\small  \[\begin{array}{llll}
    \GA_{\text{ex}\ref{ex:hs}} &=& \values{Alice}{Bob}{v_1}{v_1>0}.\\ 
    && \values{Bob}{Carol}{v_2}{v_2>0}. \\
    && \values{Carol}{Alice}{v_3}{v_3>v_1}
  \end{array}\]
 \normalsize In fact, $\ptp{Carol}$'s obligation
  $v_3>v_1$ cannot be fulfilled because  $v_1 \not\in \knows{Carol}{\GA_{\text{ex}\ref{ex:hs}}}$.
\end{EX}

%\subsection{Assumptions} 

Given a global assertion $\GA$, the function $\HS(\GA)$ below returns the
nodes of $\tree{\GA}$ where HS is violated
\[
 \HS(\GA) \mmdef
 \{ n \in \tree{\GA} \st var(\assertPed{T}{n})\not\subseteq \knows{s}{\uppathPed{n}{T}} \text{ and } \ptp{s}=\responsible{\tree{\GA}}{n} \}
\]
where $\responsible{T}{\_} : \nodes \to \participants$ yields the
responsible party of a node and is defined as
\[\responsible{T}{n} \mmdef
  \begin{cases}
    \spartyPed{T}{n}, & \text{if } \lab{n}=\iota \\
    \spartyPed{T}{\parentPed{T}{n}}, & \text{if }\lab{n}=\{\psi\}l \\
    \nada, & \text{otherwise} 
  \end{cases}
\]
Intuitively, to determine whether a node $n \in \tree{\GA}$ violates
HS, one checks if the responsible party of $n$ knows all the variables involved
in $\assertPed{\tree{\GA}}{n}$.

% $\HS$ can be implemented using the rules of the syntactic checker for
% history sensitivity in \cite{bhty10}.  The checker assumes that each
% recursion parameter is annotated with a \emph{location} consisting of
% the set of participants that know that parameter. We rely on this
% annotation and, also, on the assumption that each location includes
% exactly two participants, e.g., $\{\ptp{s}, \ptp{r}\}$. Hereafter we
% denote the location of variable $v$ in $T$ as $\location{T}{v}$.
%
% We observe that all nodes returned by $\HS(\GA)$ are either
% interaction or branching nodes (i.e., they have a responsible party).
%\begin{fact}
% For all $n \in \HS(\GA)$, $\responsible{T}{n}\not=\bot$.
%\end{fact}

Given $T \in \trees$, $\varHS{T}{\_}: \nodes \to 2^\variables$ 
is defined as
\[
\varHS{T}{n} \mmdef var(\assertPed{T}{n})\setminus
\knows{s}{\uppathPed{n}{T}} \;\;\; \text{where}\;\;\;
\ptp{s}=\responsible{T}{n}
\]
Namely, $\varHS{T}{n}$ yields the variables of $n$ not known to the
responsible party of $n$.
It is a simple observation that if HS is violated in a node $n$, then
there exists a variable in the predicate of $n$ which is not known to
the responsible party of $n$ (namely
%\begin{fact}
%\label{eq:hs0}
if  $n \in \HS(\GA)$ then $\varHS{T}{n}  \not =\varnothing$).
%\end{fact}
\begin{EX}\label{ex:3}
  Consider the following global assertion:
  \small\[\begin{array}{llll}
    \GA_{\text{ex}\ref{ex:3}} &=& \recursion{10}{v}{v>0}.\\
    && \qquad \values{Alice}{Bob}{v_1}{v \geq v_1}.\\ 
    && \qquad \values{Bob}{Carol}{v_2}{v_2>v_1}. \\
    && \qquad \values{Carol}{Alice}{v_3}{v_3>v_1}.\\
    && \qquad \values{Carol}{Bob}{v_4}{v_4>v}.\\
    && \qquad \typevar\ENCan{v_1}
  \end{array}\]
 \normalsize %
  $\HS(\GA_{\text{ex}\ref{ex:3}})=\{n_3,n_4\}$ where $n_3$ and $n_4$ are the
  nodes in $\tree{\GA_{\text{ex}\ref{ex:3}}}$ corresponding to the
  third and fourth interactions of $\GA_{\text{ex}\ref{ex:3}}$, i.e.\
  $\lab{n_3} = \values{Carol}{Alice}{v_3}{v_3>v_1}$ and
  $\lab{n_4} =\values{Carol}{Bob}{v_4}{v_4>v}$.
\end{EX}
In Example~\ref{ex:3}, $\ptp{Carol}$ is responsible for both
violations (i.e.,
$\responsible{\tree{\GA_{\text{ex}\ref{ex:3}}}}{n_3}=\responsible{\tree{\GA_{\text{ex}\ref{ex:3}}}}{n_4}=\ptp{Carol}$).
$\varHS{\tree{\GA_{\text{ex}\ref{ex:3}}}}{n_3}=\{v_1\}$ (i.e.,
$\ptp{Carol}$ has an obligation on $v_3>v_1$ without knowing $v_1$)
and the violation in $n_4$ is on
$\varHS{\tree{\GA_{\text{ex}\ref{ex:3}}}}{n_4}=\{v\}$ (i.e.,
$\ptp{Carol}$ has an obligation on $v_4>v$ without knowing $v$).
Note that the violation on HS does not imply that $\ptp{Carol}$ will
actually violate the condition $v_3>v_1$.
In fact, $\ptp{Carol}$ could unknowingly choose either a violating or
a non violating value for $v_3$.

In \S~\ref{sec:weakening} and \S~\ref{sec:varprop}, we present two
algorithms that fix, when possible, violations of HS
in a global assertion.
We discuss and compare their applicability, as well as the
relationship between the amended global assertion and the original
one.
We shall use Example~\ref{ex:3} as the running example of
\S~\ref{sec:weakening} and \S~\ref{sec:varprop}.

\subsection{Strengthening}\label{sec:weakening}

Fix a global assertion $\GA$ and its assertion tree $T=\tree{\GA}$.
Assume HS is violated at $n \in T$ and $\assertPed{T}{n}=\psi$.
Violations occur when the responsible party $\ptp{s}$ of $n$
is ignorant of at least one variable $v \in var(\psi)$.
The strengthening algorithm (cf. Def.~\ref{def:hs1}) replaces $\psi$ in
$\GA$ with an assertion $\psi[v'/v]$ so that
\begin{enumerate}
\item[(1)] $v'$ is a variable that $\ptp{s}$ knows, 
\item[(2)] if $\psi[v'/v]$ and the predicates occurring from $\treeroot T$ to $\parentPed{T}{n}$ are satisfied then also $\psi$ is satisfied.
\end{enumerate}
If there is no variable $v'$ that ensures (1) and (2) then we say that
\emph{strengthening is not applicable}.
Intuitively, the method above \emph{strengthens} $\psi$ with $\psi[v'/v]$.
Due to (2), $\psi$ can be still guaranteed relying on
the information provided by all the predicates occurring before $n$.
%modifies $\psi$ by eliminating, from the current predicate $\psi$, one of the variables causing the violation and by expressing, if possible, a constraint similar to $\psi$ using a variable known by $\ptp{s}$. 
%
Let $\predE_T : \nodes \rightarrow \predicates$ yield the conjunction
of the predicates on the path from $\treeroot T$ to the parent of a node:
\[
\predPed{T}{n} \mmdef
\begin{cases}
  \nada,
  & \text{if } \parentPed{T}{n} = \nada
  \\
  \truek,
  & \text{if } \parentPed{T}{n} = \epsilon
  \\
  \assertPed{T}{\parentPed{T}{n}}\;\land\; \predPed{T}{\parentPed{T}{n}},
  & \text{otherwise}
\end{cases}
\]

The function $\weaken{\GA}$ uses $\predE_T$ to compute a global
assertion $\GA'$ by replacing in $\GA$, if possible, the assertion
violating HS with a stronger predicate.

\begin{definition}[$\mathtt{strengthen}$]\label{def:weak} If
  $\HS(\GA)=\varnothing$ then $\weaken{\GA}$ returns $\GA$.
  If $n \in \HS(\GA)$, $v\in\varHS{T}{n}$ and there exists $v'
  \in\knows{s}{\uppathPed{n}{T}}$ such that

\begin{equation}\label{eq:hs1l}
\predPed{T}{n} \land \psi[v'/v]\ENTAILSb \psi \quad \text{ with } \quad \psi=\assertPed{T}{n}
\end{equation}
then $\weaken{\GA}$ returns $\assertion{T'}$ where $T'$ is obtained
from $T$ by replacing $\psi$ with $\psi[v'/v]$ in $\lab{n}$.

Finally, when the two cases above cannot be applied, $\weaken{\GA}$
returns $\fail{\GA}{n}$, namely it indicates that $\GA$ violates HS at
$n \in \HS(\GA)$.
\end{definition}

The algorithm $\Phi_1$ in Def.~\ref{def:hs1} recursively applies
$\weaken{\_}$ until either the global assertion satisfies HS or
$\Phi_1$ is not applicable anymore.
\begin{definition}[$\Phi_1$]\label{def:hs1}
  The algorithm $\Phi_{1}$ is defined as follows
\[\methodone{\GA} \mmdef \begin{cases}
   \weaken{\GA}, & \text{if }\;\;  \weaken{\GA}\in \{ \GA ,   \fail{\GA}{n} \}\\
   \methodone{\weaken{\GA}}, &  \text{otherwise}
%   \nada && \text{otherwise}
  \end{cases}\]
%$$ is defined as follows:
%\begin{itemize}
%\item case $\weaken{\GA}\in\{\GA,\bot\}\;\;\Rightarrow \;\;$ 
%\item otherwise $\;\;\Rightarrow\;\; \methodone{\weaken{\GA}}$.
%%(notice that in this case (\ref{eq:hs0}), (\ref{eq:hs1}) and (\ref{eq:hs2}) hold)
%\end{itemize}
\end{definition}

\begin{EX}\label{ex:4}
  Consider $\GA_{\text{ex}\ref{ex:3}}$ from Example~\ref{ex:3} and recall that
  $\HS(\GA_{\text{ex}\ref{ex:3}})=\{n_3,n_4\}$.
  Strengthening is applicable to $n_3$ where we can substitute $v_1$
  with $v_2$ in $v_3>v_1$ to satisfy condition (\ref{eq:hs1l}) in
  Def.~\ref{def:weak}:
\[(v>0\land v \geq v_1 \land v_2>v_1) \land (v_3>v_2) \ENTAILSb (v_3>v_1)\]
The invocation of $\weaken{\GA_{\text{ex}\ref{ex:3}}}$ returns
\small\[\begin{array}{llll}
    \GA' &=& \recursion{10}{v}{v>0}.\\
    && \qquad \values{Alice}{Bob}{v_1}{v \geq v_1}.\\ 
    && \qquad \values{Bob}{Carol}{v_2}{v_2>v_1}. \\
    && \qquad \values{Carol}{Alice}{v_3}{v_3>v_2}.\\
    && \qquad \values{Carol}{Bob}{v_4}{v_4>v}.\\
    && \qquad \typevar\ENCan{v_1}
  \end{array}\]
 \normalsize %
  The invocation of $\weaken{\GA'}$ returns $\fail{\GA'}{n_4}$ since
  $\GA'$ has still one violating node $n_4$ for which strengthening is not
  applicable e.g., $(v>0\land v \geq v_1 \land v_2>v_1 \land v_3>v_2) \land
  (v_4> v_2) \not \ENTAILSb (v_4>v)$.
\end{EX}

\newcommand{\chain}[2]{\mathtt{chain}_{#1}{(#2)}}

\subsection{Variable Propagation}\label{sec:varprop}

An alternative approach to solve HS problems is based on the
modification of global assertions by letting responsible parties of the violating nodes
know the variables causing the violation.
The idea is that such variables are propagated within a ``chain of
interactions''.
\begin{definition}[$\IOchainPed$]\label{def:chain}
  Let $n,n'\in T$, $n \IOchainPed n'$ iff ${n}$ appears in
  $\uppathPed{n'}{T}$ and $ \rpartyPed{T}{n}=\spartyPed{T}{n'}$.
  A vector of nodes ${n_1},\ldots,n_t$ is a \emph{chain in
    $T$} iff $n_i \IOchainPed n_{i+1}$ for all $i\in\{1,\ldots,t-1\}$.
\end{definition}
%\begin{definition}[$\IOchainPed$]\label{def:chain}
%  %
%  Let $n,n'\in T$, $n \IOchainPed n'$ iff ${n}$ appears in
%  $\uppathPed{n'}{T}$ and $ \rpartyPed{T}{n}=\spartyPed{T}{n'}$.
%  %
%  A vector of nodes $\VEC{n_1}\;n\;n'\;\VEC{n_2}$ is a \emph{chain in
%    $T$} iff $n \IOchainPed n'$ and both $\VEC{n_1}$ and $\VEC{n_2}$
%  are either empty or chains in $T$.
%\end{definition}
The relation $\IOchainPed$ is similar to the IO-dependency defined in
\cite{mps} but does not consider branching, since a branching does not
carry interaction variables.

Fix a global assertion $\GA$; let $T = \tree{\GA}$, $n\in \HS(\GA)$,
$v \in \varHS{T}{n}$, and $\ptp{s}=\responsible{T}{n}$.

The \emph{propagation} algorithm (cf. Def.~\ref{def:hs5}) is
\emph{applicable} only if there exists a $\IOchainPed$-chain in
$\uppathPed{n}{T}$ through which $v$ can be propagated from a node
whose sender knows $v$ to $n$, in which $\ptp{s}=\responsible{T}{n}$
can receive it.
Given a chain $\VEC n = n_1 \cdots n_t$ in $T$, let the
\emph{propagation of $v$ in $\VEC n$} be the tree $T' \in \trees$
obtained by updating the nodes in $T$ as follows:
\begin{itemize}
\item $\carryPed{T'}{n_1} = \carryPed{T}{n_1} \cup \{v_1\}$ and
  $\assertPed{T'}{n_1} = \assertPed{T}{n_1}\land (v_1 = v)$,
  with $v_1\in\variables$ fresh.
\item for $i=2\ldots t-1$,
   $\carryPed{T'}{n_i} = \carryPed{T}{n_i} \cup \{v_i\}$ and
   $\assertPed{T'}{n_i} =  \assertPed{T}{n_i}\land (v_{i}=v_{i-1})$,
   with $v_2,\ldots,v_{t-1} \in \variables$ fresh.
\item $\assertPed{T'}{n_t}=\assertPed{T}{n_t}[v_{t-1}/v]$
\item all the other nodes of $T$ remain unchanged.
\end{itemize}
For a sequence of nodes $\VEC n$, $\prop{T}{v}{\VEC{n}}$ denotes $T'$
as computed above if $\VEC n$ is a $\IOchainPed$-chain and $\bot$
otherwise.

\begin{EX}\label{ex:4b}
  In the global assertion $\GA_{\text{ex}\ref{ex:4b}}$ below assume
  $\ptp{Alice}$ knows $v$ from previous interactions (the ellipsis in
  $\GA_{\text{ex}\ref{ex:4b}}$).
 \small \[\begin{array}{llll}
  \GA_{\text{ex}\ref{ex:4b}} &=& \ldots &  \values{Alice}{Bob}{u_1}{\psi_1}.\\ 
  & & &   \values{Bob}{Carol}{u_2}{\psi_2}. \\
  & & & \values{Bob}{Dave}{u_3}{\psi_3}. \\
  & & & \values{Dave}{Alice}{u_4}{u_4>v}
  \end{array}\]
\normalsize
%
% \begin{minipage}{0.4\linewidth}
% \[\begin{array}{llll}
% \GA'_{\text{ex}\ref{ex:4b}}&=&\ldots  \values{Alice}{Bob}{u_1,v_1}{\psi_1\land v =v_1 }. \\ 
%    &&   \values{Bob}{Carol}{u_2}{\psi_2}. \\
%   &&    \values{Bob}{Dave}{u_3,v_2}{\psi_3\land v_1=v_2}. \\
%   &&\values{Dave}{Alice}{u_4}{(u_4> v)[v_2/v]} \\
%   \end{array}\]
% \end{minipage}\\
%
  For the chain $\VEC{n} = n_1 \; n_3 \; n_4$ in $\tree{\GA_{\text{ex}\ref{ex:4b}}}$ (where
  $n_i$ corresponds to the $i$-th interaction in $\GA_{\text{ex}\ref{ex:4b}}$),
  $\prop{\tree{\GA_{\text{ex}\ref{ex:4b}}}}{v}{\VEC{n}}$ returns $T'$ such that
  $\assertion{T'}$ is simply $\GA_{\text{ex}\ref{ex:4b}}$ with $\psi_1$ replaced by
  $\psi_1\land v =v_1$, $\psi_3$ replaced by $\psi_3\land v_1=v_2$,
  and $\psi_4$ replaced by $u_4 > v_2$
  and the fresh variables $v_1$ and $v_2$ is added to the interaction variables
  of the first and third interactions, respectively.
\end{EX}

We define a function $\mathtt{propagate}$ which takes a global
assertion $\GA$ and returns: (1) $\GA$ itself if HS is satisfied, (2)
$\fail{\GA}{n}$ if HS is violated at $n\in\tree{\GA}$ and propagation
is not applicable, (3) $\GA'$ otherwise, where $\GA'$ is obtained by
propagating a violating variable $v$ of node $n$; in the latter case,
observe that $v$ has been surely introduced in a node $n'\in
\uppathPed{n}{\tree{\GA}}$ from which $v$ can be propagated, since we
assume $\GA$ closed.
% By construction of global assertions, $n_0$ is either the only node carrying $v_0$ (if $n_0$ is an interaction) or is the only node defining $v_0$ (if $n_0$ is a recursion definition).

\begin{definition}[$\mathtt{propagate}$] The function $\propagate{\GA}$ returns
\begin{itemize}
\item $\GA$, if $\HS(\GA)=\varnothing$
\item $\prop{T}{v}{\VEC{n}}$, if $T=\tree{\GA}$ and there exists
  $n\in\HS(\GA)$ with $v\in\varHS{T}{n}$ and there exists $\VEC{n}=n_0
  \; \VEC{n_1} \; n$ chain in $T$ such that $\spartyPed{T}{n_0}$ knows
  $v$
\item $\fail{\GA}{n}$ with $n\in\HS(\GA)$ otherwise.
\end{itemize}
\end{definition}

\begin{EX}\label{ex:5}
  Consider again the global assertion $\GA'$ obtained after the
  invocation $\weaken{\GA_{\text{ex}\ref{ex:3}}}$ in
  Example~\ref{ex:4}.
  In this case $\HS(\GA')=\{n_4\}$ with $n_4 =
  \values{Carol}{Bob}{v_4}{v_4>v}$.
  Propagation is applicable to $n_4$ and $\propagate{\GA'}$ returns
\small\[\begin{array}{llll}
    \GA'' &=& \recursion{10}{v}{v>0}.\\
    && \values{Alice}{Bob}{v_1}{v \geq v_1}.\\ 
    && \values{Bob}{Carol}{v_2 \; u_1}{v_2>v_1\land u_1=v}. \\
    && \values{Carol}{Alice}{v_3}{v_3>v_2}.\\
    && \values{Carol}{Bob}{v_4}{v_4>u_1}.\\
    && \typevar\ENCan{v_1}
  \end{array}\]
  \normalsize
  by propagating $v$ from the second interaction where the sender
  $\ptp{Bob}$ knows $v$ to $\ptp{Carol}$, $\GA''$ satisfies HS.
  The predicate of the last interaction derives from the substitution
  $(v_4>v)[u_1/v]$.
\end{EX}

The propagation algorithm is defined below and is based on a repeated
application of $\propagate{\_}$.

\begin{definition}[$\Phi_{2}$]\label{def:hs5}
  Given a global assertion $\GA$, the function $\Phi_2$ is defined as
  follows:
\[\methodtwo{\GA}= \left\{
  \begin{array}{lll}
   \propagate{\GA}, & \text{if }\;\;  \propagate{\GA}\in\{\GA,\fail{\GA}{n}\}\\
   \methodtwo{\propagate{\GA}}, &  \text{otherwise}
%   \nada && \text{otherwise}
  \end{array}
\right.\]
\end{definition}

\subsection{Properties of $\Phi_1$ and $\Phi_2$}
We now discuss the properties of the global assertions amended by each
algorithm and we compare them.
Hereafter, we say $\Phi_1$ (resp. $\Phi_2$) returns $\GA$ if either it
returns $\GA$ or it returns $\fail{\GA}{n}$ for some $n$.

The applicability of $\Phi_1$ depends on whether it is possible to
find a variable known by the responsible party of the violating node
such that condition (\ref{eq:hs1l}) in Def.~\ref{def:weak} is
satisfied.
The applicability of $\Phi_2$ depends on whether there exists a chain
through which the problematic variable can be propagated.\footnote{
  Linearity of the underlying multiparty session types (i.e., a
  property that ensures the existence of a dependency chain between
  the interactions)~\cite{mps} does not guarantee that $\Phi_2$ is
  always applicable. The reason is that $n_1 \IOchain n_2$ in the
  sense of \cite{mps} does not imply $n_1 \IOchainPed n_2$ since
  $\IOchainPed$ does not take into account branching but only
  interactions.}

Notably, there are cases in which $\Phi_1$ is applicable and $\Phi_2$
is not, and vice versa.
Also, $\Phi_1$ and $\Phi_2$ return, respectively, two different global assertions from
the original one; hence it may not always be clear which one should be preferred.

\begin{REM}\label{rmk:secrecy}
  In distributed applications it is often necessary to guarantee that
  exchanged information is accessible only to intended participants.
%  (i.e., the content of a message is known only by the sender and the
%  receiver).
  %
  It is worth observing that $\Phi_2$ discloses information about the
  propagated variable to the participants involved in the propagation
  chain.
  The architect should therefore evaluate when it is appropriate to
  use $\Phi_2$.
\end{REM}

First we show that both $\Phi_1$ and $\Phi_2$ do not change the
structure of the given global assertion.
\begin{PRO}
  Let $\GA$ be a global assertion.
  If $\Phi_1(\GA)$ or $\Phi_2(\GA)$ return $\GA'$ then $\tree{\GA}$
  and $\tree{\GA'}$ are isomorphic, namely they have the same tree
  structure, but different labels.
\end{PRO}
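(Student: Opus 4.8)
The plan is to reduce the proposition to a single-step statement and then iterate. Both $\Phi_1$ and $\Phi_2$ are defined by a recursion that applies $\weaken{\_}$, resp. $\propagate{\_}$, until a fixed point or a $\fail{\GA}{n}$ is reached. So it suffices to establish two things. First, a \emph{one-step lemma}: if $\GA$ is a global assertion, $T=\tree{\GA}$, and a single successful application of $\weaken{\_}$ (resp. $\propagate{\_}$) to $\GA$ yields a genuine global assertion $\GA_1$, then $\tree{\GA_1}$ has the same underlying tree as $T$ — same node set, same parent function — and differs from $T$ only in the labels of finitely many nodes. Second, an \emph{induction} on the length of the computation of $\Phi_i(\GA)$ (finite, since by hypothesis $\Phi_i(\GA)$ returns), using that the relation ``isomorphic via the identity on nodes, up to relabelling'' composes.

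For the one-step lemma in the $\weaken{\_}$ case: if $\HS(\GA)=\varnothing$ then $\GA_1=\GA$ and there is nothing to prove. Otherwise, by Def.~\ref{def:weak}, $\GA_1=\assertion{T'}$ where $T'$ is obtained from $T$ by replacing $\psi=\assertPed{T}{n}$ with $\psi[v'/v]$ in $\lab{n}$ for a single node $n\in\HS(\GA)$. This relabelling neither adds, deletes, nor re-parents any node, and it preserves the syntactic category of $\lab{n}$ (an interaction label stays an interaction label; a branch label $\{\psi\}l$ becomes $\{\psi[v'/v]\}l$); hence $\assertion{T'}$ is a well-formed global assertion whose tree, by Fact~\ref{fact:assert}, equals $T'$, i.e.\ $\tree{\GA_1}=T'$, which shares its underlying tree with $T$. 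The $\propagate{\_}$ case is entirely analogous: by the construction of $\prop{T}{v}{\VEC{n}}$, $T'$ is obtained from $T$ by enlarging $\carryE$ with fresh variables and conjoining the equalities $v_i=v_{i-1}$ (resp. $v_1=v$) to $\assertE$ on the nodes of the chain $\VEC n$, plus one substitution at the last node of $\VEC n$, while ``all the other nodes of $T$ remain unchanged''; once more only labels change, so $\tree{\GA_1}=T'$ shares the underlying tree with $T$.

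For the induction I argue on the number $k$ of applications of $\weaken{\_}$ (resp. $\propagate{\_}$) in the computation of $\Phi_1(\GA)$ (resp. $\Phi_2(\GA)$). In the base clause, $\Phi_1(\GA)$ is $\GA$ itself or $\fail{\GA}{n}$, so either $\GA'=\GA$ or no new assertion is produced, and the claim is immediate. In the inductive clause, $\GA_1=\weaken{\GA}$ (resp. $\propagate{\GA}$) is a genuine new assertion and $\Phi_1(\GA)=\Phi_1(\GA_1)$ with a strictly shorter computation; by the one-step lemma $\tree{\GA_1}$ is a relabelling of $\tree{\GA}$, by the inductive hypothesis $\tree{\GA'}=\tree{\Phi_1(\GA_1)}$ is a relabelling of $\tree{\GA_1}$, and composing the two identity-on-nodes bijections gives the required isomorphism between $\tree{\GA}$ and $\tree{\GA'}$, changing only finitely many labels. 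Verbatim the same argument handles $\Phi_2$.

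The delicate points, where I would be careful, are exactly those in the one-step lemma: (a) checking that the relabellings prescribed by Def.~\ref{def:weak} and by $\prop{T}{v}{\VEC{n}}$ really stay inside $\trees$ — in particular that the fresh interaction variables introduced during propagation respect Barendregt's convention and clash with no bound name elsewhere — so that Fact~\ref{fact:assert} applies and $\tree{\GA_1}$ is literally the relabelled tree $T'$; and (b) being explicit that the node-identity bijections compose, which is what licenses the conclusion that the final trees have ``the same tree structure, but different labels''. Everything else is routine bookkeeping over Def.~\ref{def:tree} and Facts~\ref{fact:text} and \ref{fact:assert}.
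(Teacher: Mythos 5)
Your proposal is correct, and it rests on the same key observation the paper uses, namely that one successful application of $\weaken{\_}$ (resp.\ $\propagate{\_}$) only relabels nodes of the assertion tree --- replacing a predicate by its substitution instance, or enlarging $\carryE$ and conjoining equalities along the chain --- without adding, deleting or re-parenting any node. The paper in fact states this proposition without an explicit proof and folds the justification into the proof sketch of Proposition~\ref{prop:str}, which simply asserts ``induction on the structure of $\GA$'' plus the remark that $\methodoneE$ changes only predicates and $\methodtwoE$ additionally adds fresh variables to interaction nodes; your decomposition into a one-step relabelling lemma followed by induction on the number of iterations of $\Phi_1$/$\Phi_2$ (finite because the hypothesis assumes the algorithm returns) is a more faithful match to the recursive definitions of the two algorithms, and composing the identity-on-nodes bijections is exactly the bookkeeping the paper leaves implicit. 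Your two flagged delicate points --- that the relabelled tree stays in $\trees$ so Fact~\ref{fact:assert} yields $\tree{\assertion{T'}}=T'$, with freshness of the propagated variables guaranteeing Barendregt's convention, and that the failure case $\fail{\GA}{n}$ returns the current (already relabelled) assertion so the same argument covers it --- are precisely the details the paper glosses over, so nothing is missing.
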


Whereas $\Phi_1$ does not change the underlying type of the global
assertion, $\Phi_2$ does. Indeed, in the resulting global
assertion, more variables are exchanged in each interaction involved
in the propagation. However, the structure of the tree remains the same.

Let $erase(\GA)$ \label{def:erase} be the function that returns the underlying
global type~\cite{mps} corresponding to $\GA$ (i.e.\ a global
assertion without predicates).
\begin{PRO}[Underlying Type Structure]\label{prop:str}
  Let $\GA$ be a global assertion,
  \begin{itemize}
  \item if $\methodone{\GA}$ returns $\GA'$ then
    $erase(\GA)=erase(\GA')$ 
  \item if $\methodtwo{\GA}$ returns $\GA'$ then for all $n\in
    \tree{\GA}$ and its corresponding node $n'\in
    \tree{\GA'}$, \[\carryPed{\tree{\GA}}{n}\subseteq\carryPed{\tree{\GA'}}{n'}\]
  \end{itemize}
\end{PRO}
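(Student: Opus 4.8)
The plan is to prove Proposition~\ref{prop:str} by induction on the number of recursive invocations of the auxiliary functions $\weaken{\_}$ and $\propagate{\_}$ that $\Phi_1$ (resp.\ $\Phi_2$) performs before terminating, relying on the definitions of $\weaken{\_}$, $\propagate{\_}$ and $\prop{T}{v}{\VEC n}$ together with Facts~\ref{fact:text} and~\ref{fact:assert} (the isomorphism between global assertions and their trees). The base cases are trivial: $\weaken{\GA}\in\{\GA,\fail{\GA}{n}\}$ gives $\methodone{\GA}=\GA$ (up to the failure tag), so $erase(\GA)=erase(\GA')$ vacuously, and similarly for $\methodtwo{\_}$; here one only needs that $\fail{\_}{\_}$ is not considered a genuine change, matching the convention ``$\Phi_i$ returns $\GA$''.

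For the first bullet, I would first establish a one-step lemma: if $\weaken{\GA}$ returns $\assertion{T'}$ with $T'$ obtained from $T=\tree{\GA}$ by replacing $\psi$ with $\psi[v'/v]$ in $\lab{n}$, then $erase(\assertion{T'})=erase(\assertion T)=erase(\GA)$. This is immediate from the definition of $erase(\_)$: erasing predicates maps both $\lab n$ and its modified version to the same label (the sender/receiver pair and the interaction variables $\carryPed{T}{n}=\VARS\iota$ are untouched by a predicate substitution), and all other nodes are identical; then invoke Fact~\ref{fact:assert}. The inductive step composes this with the induction hypothesis applied to $\weaken{\GA}$, using transitivity of $=$ on global types.

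For the second bullet, the corresponding one-step observation is that $\prop{T}{v}{\VEC n}$ only ever performs, on each node $n_i$ of the chain, updates of the form $\carryPed{T'}{n_i}=\carryPed{T}{n_i}\cup\{v_i\}$ and an augmentation of the predicate, plus a predicate substitution at $n_t$, leaving all other nodes unchanged; hence $\carryPed{T}{n}\subseteq\carryPed{\prop{T}{v}{\VEC n}}{n'}$ for every node, with equality everywhere except on the (finitely many) chain nodes. The inductive step needs that $\subseteq$ is transitive and that node correspondence across the isomorphic trees composes, i.e.\ the node $n'$ of $\tree{\propagate\GA}$ corresponding to $n\in\tree\GA$ maps on to the corresponding node of the final tree; since every step is an isomorphism on tree structure (this is exactly the previous Proposition, which I may assume), the correspondence is well-defined and functorial.

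The main obstacle I anticipate is bookkeeping the node correspondence cleanly: $\Phi_2$ changes $\carryPed{\_}{\_}$ at chain nodes, so one must be careful that ``the corresponding node $n'$'' is defined via the structural isomorphism (Proposition on structure), not via labels, and that the chain $\VEC n$ used at each step lies in the current tree while the inclusion statement is phrased relative to the original $\GA$. A secondary subtlety is handling the $\fail{\_}{\_}$ outputs uniformly — one should note at the outset that whenever $\Phi_i$ ``returns $\GA'$'' in the sense of the proposition it has gone through a (possibly empty) sequence of genuine rewrites each of which is covered by the one-step lemma, so the failure branch never contributes. Modulo this, the argument is a routine induction and I would keep the exposition short.
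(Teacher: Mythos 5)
Your proposal is correct and rests on exactly the same observation as the paper's proof: each application of $\weaken{\_}$ only substitutes a variable inside a predicate (so $erase$ is untouched), while each application of $\propagate{\_}$ only adds fresh variables to the $\carryE$-sets of chain nodes and adjusts predicates, so the structure is preserved and the inclusion holds. The only difference is organizational — you induct on the number of rewrite iterations with a one-step lemma, whereas the paper phrases it as induction on the structure of $\GA$ — which is an equally valid (indeed slightly more explicit) packaging of the same argument.
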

\begin{proof}[Proof sketch]
  The proof is by induction on the structure of $\GA$ and it trivially follows from
  the fact that neither $\methodoneE$ nor $\methodtwoE$ changes the structure of the assertion tree.
  In fact, $\methodoneE$ changes only the predicates. On the other hand, $\methodtwoE$ changes the predicates
  and {adds} fresh variables to interaction nodes, therefore changing the type of
the exchanged data.
\qedhere
\end{proof}

 The application of $\Phi_1$ and $\Phi_2$ affects the predicates
  of the original global assertion. In $\Phi_1$, strengthening allows less 
  values for the interaction variables of the amended interaction.
Conversely, the predicates computed by $\Phi_2$ are equivalent to the
original ones (i.e., they allow sender and receiver to chose/expect
the same set of values). Nevertheless, such predicates are
syntactically different as $\Phi_2$ adds the equality predicates on
the propagated variables.

%  \begin{PRO}[Predicates] Let $\GA$ be a global assertion. 
%\begin{itemize}
%\item if $\methodone{\GA}$ returns $\GA'$ then for all $n\in \tree{\GA}$ and its corresponding node $n'\in \tree{\GA'}$
%\begin{itemize}
%\item $\assertPed{\tree{\GA}}{n}  \ENTAILSb \assertPed{\tree{\GA'}}{n'}$
%\item $\predPed{\tree{\GA}}{n}\land \assertPed{\tree{\GA}}{n} \ENTAILSb \predPed{\tree{\GA'}}{n'}\land  \assertPed{\tree{\GA}}{n}$
%\end{itemize}
%
%\item  if $\methodtwo{\GA}$ returns $\GA'$ then for all $n\in \tree{\GA}$ and its corresponding node $n'\in \tree{\GA'}$
%\begin{itemize}
%\item $\assertPed{\tree{\GA}}{n}  = \assertPed{\tree{\GA'}}{n'} \land \psi \quad (\text{for some }\psi)$
%\item $\predPed{\tree{\GA}}{n}\ENTAILSb\assertPed{\tree{\GA}}{n}  \iff  \predPed{\tree{\GA'}}{n'}\ENTAILSb\assertPed{\tree{\GA'}}{n'} \land \psi $
%\end{itemize}
%\end{itemize}
%\end{PRO}

\begin{PRO}[Assertion Predicates]\label{prop:ap} Let $\GA$ be a global assertion,
  \begin{enumerate}
  \item \label{p:1} if $\methodone{\GA}$ returns $\GA'$ then for all $n\in
    \tree{\GA}$ whose label is modified by $\methodoneE$
    and its corresponding node $n'\in \tree{\GA'}$
    (cf. Proposition~\ref{prop:str}), it holds that
    $\predPed{\tree{\GA'}}{n'}\land \assertPed{\tree{\GA'}}{n'}
      \ENTAILSb \assertPed{\tree{\GA}}{n}$

  \item \label{p:2}  if $\methodtwo{\GA}$ returns $\GA'$ then for all $n\in \tree{\GA}$
    whose label is modified by $\methodtwoE$ and its corresponding node $n'\in \tree{\GA'}$
    \begin{enumerate}
    \item \label{p:a} $\assertPed{\tree{\GA'}}{n'}$ is the predicate
      $\assertPed{\tree{\GA}}{n} \land \psi$
    \item \label{p:b}$\predPed{\tree{\GA}}{n} \ENTAILSb
      \assertPed{\tree{\GA}}{n} \land \psi \iff
      \predPed{\tree{\GA'}}{n'}\ENTAILSb\assertPed{\tree{\GA'}}{n'}$
    \end{enumerate}
    For some $\psi \in \Psi$ satisfiable.
  \end{enumerate}
\end{PRO}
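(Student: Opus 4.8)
The plan is to establish both items by induction on the number of iterations of $\methodoneE$ (resp. $\methodtwoE$). Since by Definition~\ref{def:hs1} (resp. Definition~\ref{def:hs5}) these algorithms are iterated applications of $\weaken{\_}$ (resp. $\propagate{\_}$), and the successive assertion trees differ only in their labels, every node of $\tree{\GA}$ can be tracked to its corresponding node in each intermediate tree. The base case — no modification, i.e. $\HS(\GA)=\varnothing$, or an iteration ending in a token $\fail{\GA}{n}$ before any node is touched — is vacuous, and the inductive step analyses the single application of $\weaken{\_}$ or $\propagate{\_}$ performed at that step.

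For item~(\ref{p:1}) I would first isolate a \emph{monotonicity lemma}: one application of $\weaken{\_}$ that replaces, at a node $m$, the predicate $\psi = \assertPed{T}{m}$ by $\psi[v'/v]$ can only strengthen the path predicate at \emph{every} node, i.e. $\predPed{T'}{p}\ENTAILSb\predPed{T}{p}$ for all $p$; indeed, condition~(\ref{eq:hs1l}) gives $\predPed{T}{m}\land\psi[v'/v]\ENTAILSb\psi$, and no predicate strictly above $m$ is altered. With this in hand, the inductive step preserves the invariant $\predPed{T_k}{p}\land\assertPed{T_k}{p}\ENTAILSb\predPed{\tree{\GA}}{p}\land\assertPed{\tree{\GA}}{p}$ for every position $p$ of the $k$-th tree $T_k$: for the node $m$ modified at step $k+1$ its path predicate is unchanged by that step, so this is exactly~(\ref{eq:hs1l}) composed with the induction hypothesis; for any other node it follows from the monotonicity lemma and the induction hypothesis. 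Specialising the invariant to the final tree $\tree{\GA'}$ and weakening the right-hand side to $\assertPed{\tree{\GA}}{n}$ yields~(\ref{p:1}).

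For item~(\ref{p:2}) I would analyse the three clauses of the $\prop{T}{v}{\VEC n}$ construction for a single chain $\VEC n = n_1\cdots n_t$, where $n_t$ is the node violating HS. Nodes $n_1,\dots,n_{t-1}$ have their predicate replaced by $\assertPed{T}{n_i}\land(v_i = v_{i-1})$ (with $v_0 := v$); this is a genuine conjunction with a satisfiable conjunct, because $v_i$ is fresh, which gives~(\ref{p:a}). For the last node, the path predicate of $n_t$ in the updated tree contains the chained equalities $v_1 = v,\dots,v_{t-1}=v_{t-2}$ (all the $n_i$, $i<t$, are ancestors of $n_t$ by definition of $\IOchainPed$), hence entails $v_{t-1} = v$; therefore the substituted predicate $\assertPed{T}{n_t}[v_{t-1}/v]$ agrees, relative to that path predicate, with $\assertPed{T}{n_t}\land(v_{t-1}=v)$, again of the shape required by~(\ref{p:a}). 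For~(\ref{p:b}), the variables $v_1,\dots,v_{t-1}$ occur nowhere in $\predPed{T}{n}$ or $\assertPed{T}{n}$ other than as the data freshly introduced at their own interactions, so enriching the path predicate with $\bigwedge_i (v_i = v_{i-1})$ is a \emph{conservative extension} of the logical theory: every model of $\predPed{T}{n}$ extends to one of the enriched predicate by setting $v_i := v_{i-1}$, and conversely; the biconditional then transports across this extension, using~(\ref{p:a}). Iterating $\propagate{\_}$ only accumulates further such fresh-equality conjuncts at already-modified nodes, so the invariant composes along the iteration.

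The step I expect to be the main obstacle is the treatment of the last node $n_t$ of a propagation chain in item~(\ref{p:2}): its predicate is changed by a \emph{substitution} rather than a conjunction, so matching it with the ``$\assertPed{\tree{\GA}}{n}\land\psi$'' form of~(\ref{p:a}) and establishing the biconditional~(\ref{p:b}) requires arguing modulo the path predicate — precisely the equalities that pin $v_{t-1}$ to $v$ — and keeping careful track of which variables are fresh (and of the fact that, as in the well-assertedness conditions of~\cite{bhty10}, the variables introduced at a node are read existentially in these entailments), so that the newly added equalities genuinely extend the theory conservatively instead of constraining variables that already occur.
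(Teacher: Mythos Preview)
Your proposal is correct and follows the same line as the paper, which offers only a very terse sketch: for item~(\ref{p:1}) it simply points to condition~(\ref{eq:hs1l}), and for item~(\ref{p:2}) it appeals to Def.~\ref{def:chain} and the freshness of the added equality variables. Your monotonicity lemma and iteration invariant for~(\ref{p:1}), and your conservative-extension argument for~(\ref{p:b}), are exactly the details one would fill in behind that sketch.

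The concern you single out at the end is well taken and is in fact glossed over in the paper. The paper's sketch says ``a predicate of the form $v_1 = v$ or $v_i = v_{i-1}$ is added to each predicate of the nodes in the chain,'' which matches the first $t-1$ nodes of the propagation but not the terminal node $n_t$, whose predicate is replaced by the \emph{substitution} $\assertPed{T}{n_t}[v_{t-1}/v]$ rather than a conjunction. So item~(\ref{p:a}), read literally, does not hold at $n_t$. Your fix---observing that the chained equalities $v_1=v,\dots,v_{t-1}=v_{t-2}$ all lie on $\uppathPed{n_t}{T'}$, hence $\predPed{T'}{n_t}\ENTAILSb v_{t-1}=v$, so that $\assertPed{T}{n_t}[v_{t-1}/v]$ and $\assertPed{T}{n_t}\land(v_{t-1}=v)$ coincide modulo the path predicate---is the natural way to rescue the statement, and is more careful than what the paper records. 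In short: same approach, but you have spotted and repaired a wrinkle the paper's sketch leaves implicit.
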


\begin{proof}[Proof sketch]
The proof of item~\ref{p:1} relies on the fact that $\methodoneE$ either does not
change $\GA$ or replaces a problematic variable by a variables for which~\eqref{eq:hs1l} holds.
The proof of item~\ref{p:2} relies on Def.\ref{def:chain}, i.e.\ a predicate of the form $v_1 = v$ or $v_i = v_{i-1}$ is added to each predicate of the nodes in the chain. The additional predicates are satisfiable since they constrain only fresh variables (i.e.\ $v_i$).
\qedhere
\end{proof}
The statement \ref{p:b} in Proposition~\ref{prop:ap} amounts to say
  that $\assertPed{\tree{\GA}}{n} \land \psi$ is equivalent to
  $\assertPed{\tree{\GA'}}{n'}$ when such predicates are taken in
  their respective contexts.

Finally, we show that $\Phi_1$ and $\Phi_2$ do not add violations (of either HS or TS) to the amended global assertions (Proposition~\ref{pro:preserve12}) and that if the return value is not of the type $\fail{\GA}{n}$ then 
the amended global assertion satisfies HS (Theorem~\ref{pro:improve12}). 

\begin{PRO}[Properties Preservation]\label{pro:preserve12}
  Assume $\Phi_{i}(\GA)$ returns $\GA'$ with $i\in\{1,2\}$.
  If $\HSProblem(\GA) = \varnothing$ then $\HSProblem(\GA') = \varnothing $
  and if $\TSProblem{\GA} = \varnothing$ then $\TSProblem{\GA'} = \varnothing$.
\end{PRO}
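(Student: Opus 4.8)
The plan is to treat the two properties separately and, for each, to reduce the statement to the analysis of a single application of $\weaken{\_}$ (for $\methodoneE$) or of $\propagate{\_}$ (for $\methodtwoE$). The history-sensitivity half is immediate: if $\HSProblem(\GA)=\varnothing$ then, directly from Def.~\ref{def:weak} and Def.~\ref{def:hs5}, $\weaken{\GA}=\GA$ and $\propagate{\GA}=\GA$, so the first clause in the definitions of $\methodoneE$ and $\methodtwoE$ fires and $\methodone{\GA}=\methodtwo{\GA}=\GA=\GA'$, whence $\HSProblem(\GA')=\varnothing$. (That the returned assertion satisfies HS whenever it is not of the form $\fail{\GA}{n}$ is the separate content of Theorem~\ref{pro:improve12}.)

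For temporal satisfiability, observe first that, since $\methodone{\GA}$ (resp.\ $\methodtwo{\GA}$) is assumed to return $\GA'$, its computation unfolds as a finite chain of one-step rewrites $\GA=\GA_0\leadsto\cdots\leadsto\GA_k=\GA'$, with each $\GA_{j+1}=\weaken{\GA_j}$ (resp.\ $\propagate{\GA_j}$) distinct from $\GA_j$ and from every $\fail{\GA_j}{n}$; by induction on $k$ it then suffices to show that one such step preserves emptiness of $\TSProblemE$. Here I would lean on Propositions~\ref{prop:str} and~\ref{prop:ap}: a step keeps the assertion tree shape fixed and alters only the label of the node $n$ being repaired (for $\methodoneE$), or that of $n$ together with the ancestors of $n$ on the chosen $\IOchainPed$-chain (for $\methodtwoE$). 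The $\methodtwoE$-step is the easy case: it adds fresh variables $v_i$ along the chain together with the equalities $v_1=v$ and $v_i=v_{i-1}$ and rewrites the last predicate to $\assertPed{T}{n_t}[v_{t-1}/v]$, and by Proposition~\ref{prop:ap}(\ref{p:2}) the modified predicates are, in their respective contexts, equivalent to the original ones up to the fresh $v_i$, which are pinned to copies of $v$. So every partial valuation consistent with a prefix of $\GA_j$ extends---by assigning each $v_i$ the value of $v$---to one consistent with the matching prefix of $\GA_{j+1}$, and conversely by discarding the $v_i$; since emptiness of $\TSProblemE$ is a property of such prefixes, $\TSProblem{\GA_j}=\varnothing$ and $\TSProblem{\GA_{j+1}}=\varnothing$ are equivalent.

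The $\methodoneE$-step carries the real argument. Here $\psi=\assertPed{T}{n}$ is replaced, at the single node $n$, by $\psi[v'/v]$, with $\predPed{T}{n}$ unchanged (the ancestors of $n$ are untouched) and, by~\eqref{eq:hs1l}, $\predPed{T}{n}\land\psi[v'/v]\ENTAILSb\psi$. For TS at $n$ itself and at any descendant of $n$, I would note that the strengthening only shrinks the valuations admitted at $n$, leaving every node-predicate unchanged elsewhere, so a TS violation there in $\GA_{j+1}$ would already be one in $\GA_j$, contradicting $\TSProblem{\GA_j}=\varnothing$. The main obstacle is TS at a \emph{proper ancestor} $m$ of $n$: a valuation $\sigma$ of the variables introduced up to $m$ that can be completed to a run of $\GA_j$---hence completed through the old, weaker $\psi$---must be shown still completable in $\GA_{j+1}$, i.e.\ through the stronger $\psi[v'/v]$. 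My plan here is to start from the $\GA_j$-completion of $\sigma$ granted by $\TSProblem{\GA_j}=\varnothing$ and to adjust it so that $v$ and the in-scope variable $v'$ take the same value at $n$ (note that $v$ must be introduced somewhere on the path to $n$: it cannot be carried by $n$, since then its sender would know it, against $v\notin\knows{s}{\uppathPed{n}{T}}$); once $v$ and $v'$ agree, $\psi[v'/v]$ coincides with $\psi$ under that valuation and one falls back on the original completion. The delicate point, and the step I expect to cost the most, is establishing that such an adjustment is always available---compatible with the predicate that introduces $v$ and with every predicate after it---and this is exactly where~\eqref{eq:hs1l}, the closedness of $\GA$, and the temporal satisfiability of $\GA_j$ have to be used together; everything else reduces to bookkeeping on top of Propositions~\ref{prop:str} and~\ref{prop:ap}. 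The same case analysis covers repairs performed at branching-label nodes, reading ``responsible party'' as the selector of the parent node, and adds nothing essential.
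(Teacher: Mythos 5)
Your HS half, your reduction to single rewrite steps, and your treatment of the $\methodtwoE$ step (fresh variables pinned by equalities, Proposition~\ref{prop:ap}(\ref{p:b})) all line up with the paper's sketch. The gap is in the $\methodoneE$ step, and it stems from reading TS as a global extendability property of partial valuations rather than as Def.~\ref{def:ts}. Under the paper's definition, $\TSnode{T}{m}$ depends only on the predicates along $\uppathPed{m}{T}$; replacing $\assertPed{T}{n}$ by $\psi[v'/v]$ therefore cannot affect any \emph{proper ancestor} of $n$, so the ``main obstacle'' on which you plan to spend the real work (adjusting a completion so that $v$ and $v'$ agree, to keep ancestors satisfiable) is vacuous --- there is nothing to prove there. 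Conversely, the point you dismiss in one line is exactly where the content lies: for $n$ itself, TS in the amended tree requires $\predPed{T}{n}\ENTAILSb\exists\,\carryPed{T}{n}.\,\psi[v'/v]$, and your claim that ``a TS violation there in $\GA_{j+1}$ would already be one in $\GA_j$'' is backwards: strengthening shrinks the witnesses available at $n$, so it can \emph{create} a violation at $n$ that was not present before. (For descendants of $n$ your monotonicity remark is fine, since by~\eqref{eq:hs1l} their path conjunction only gets stronger while their own predicates are unchanged.)

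Moreover, the needed entailment does not follow from~\eqref{eq:hs1l} together with $\TSProblem{\GA_j}=\varnothing$ by the kind of argument you give: take $\assertPed{T}{n}=(u>v\land u<5)$ with $\carryPed{T}{n}=\{u\}$ and $\predPed{T}{n}=(v\le 3)\land(v'\ge v)$; then \eqref{eq:hs1l} holds for substituting $v'$ for $v$ and $\predPed{T}{n}\ENTAILSb\exists u.(u>v\land u<5)$, yet $\predPed{T}{n}\not\ENTAILSb\exists u.(u>v'\land u<5)$. The paper's own sketch concentrates precisely on this step --- it asserts $\predPed{T}{n}\ENTAILSb\exists\,\carryPed{T}{n}$ of the substituted predicate as the heart of TS preservation for $\methodoneE$ --- so any acceptable proof must either establish that entailment (possibly by constraining the choice of $v'$ in $\mathtt{strengthen}$) or flag it explicitly; it cannot be absorbed into the ``shrinking only helps'' remark, and no amount of work on ancestors compensates for its absence.
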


\begin{proof}[Proof sketch]
The proof of HS preservation by both algorithms follows by the fact that
they both return $\GA$ if $\HSProblem(\GA) = \varnothing$.
TS preservation in $\methodoneE$ follows from the fact that predicates
may only be changed by a variable substitution.
For $T = \tree{\GA}$, such that $\TSProblem{\GA} = \varnothing$,
we have that, for any $n \in T$
\[
\predPed{T}{n} \ENTAILSb \exists \carryPed{T}{n} . \phi
\]
by definition of TS. And, by~\eqref{eq:hs1l}, we have that
\[
\predPed{T}{n} \ENTAILSb \exists \carryPed{T}{n} . \phi[v/v']
\]
i.e.\ TS is preserved by $\methodoneE$ .
TS preservation in $\methodtwoE$ follows from the fact that the predicates of a global assertions are only modified by adding equalities between problematic variables and fresh variables (see statement~\ref{p:b} in Propostion~\ref{prop:ap}).
\qedhere
\end{proof}

\begin{theorem}[Correctness]\label{pro:improve12}
  If there is $\GA'$ such that $\Phi_1(\GA)={\GA'}$ or
  $\Phi_2(\GA)={\GA'}$ then $\HSProblem(\GA') = \varnothing$.
\end{theorem}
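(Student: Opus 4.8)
The plan is to reduce the statement to a single structural observation about the recursive schemes defining $\Phi_1$ and $\Phi_2$. Both algorithms have the same shape: writing $f(\GA)$ for $\weaken{\GA}$ when $\Phi=\Phi_1$ and for $\propagate{\GA}$ when $\Phi=\Phi_2$, in each case $\Phi$ returns $f(\GA)$ whenever $f(\GA)\in\{\GA,\fail{\GA}{n}\}$ and otherwise recurses on $f(\GA)$. So I would carry out the argument once, parametrically in $f$, and instantiate it twice.

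The crux is the observation: \emph{if $f(\GA)$ is a global assertion syntactically equal to $\GA$, then $\HSProblem(\GA)=\varnothing$.} I would prove it by inspecting the clauses of Def.~\ref{def:weak} (resp.\ of the definition of $\mathtt{propagate}$) and checking that the only clause that can reproduce $\GA$ verbatim is the one guarded by $\HSProblem(\GA)=\varnothing$. The failure clause returns $\fail{\GA}{n}$, which is not a global assertion. The strengthening clause returns $\assertion{T'}$, where $T'$ is $T=\tree{\GA}$ with the predicate $\psi=\assertPed{T}{n}$ at a violating node $n$ replaced by $\psi[v'/v]$; since $v\in\varHS{T}{n}\subseteq var(\psi)$ while $v'\in\knows{s}{\uppathPed{n}{T}}$, we have $v\neq v'$, hence $\psi[v'/v]\neq\psi$, hence $T'\neq T$, and by the tree/assertion isomorphism of Facts~\ref{fact:text} and~\ref{fact:assert} also $\assertion{T'}\neq\GA$. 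The propagation clause returns (the reading of) the tree $\prop{T}{v}{\VEC n}$, which by construction adds a \emph{fresh} interaction variable together with an equality conjunct to the first node of the chain, so once more $\prop{T}{v}{\VEC n}\neq T$ and hence its reading differs from $\GA$. I expect this case analysis — showing that every ``non-trivial'' clause genuinely alters the assertion — to be the only step that needs real (if short) verification; the rest is bookkeeping.

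With this observation I would then unfold $\Phi$. Assume $\Phi(\GA)=\GA'$ for a global assertion $\GA'$; then (equivalently, by induction on the number of recursive calls) there is a finite sequence $\GA=\GA_0,\GA_1,\dots,\GA_k$ with $\GA_{i+1}=f(\GA_i)$ and $\GA_{i+1}\notin\{\GA_i,\fail{\GA_i}{n}\}$ for each $i<k$, while $f(\GA_k)\in\{\GA_k,\fail{\GA_k}{n}\}$ and $\Phi(\GA)=f(\GA_k)$. Since $\Phi(\GA)=\GA'$ is a global assertion whereas $\fail{\GA_k}{n}$ is not, we must have $f(\GA_k)=\GA_k=\GA'$; applying the observation to $\GA_k$ gives $\HSProblem(\GA')=\HSProblem(\GA_k)=\varnothing$. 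As this argument is uniform in $\Phi\in\{\Phi_1,\Phi_2\}$, it settles both disjuncts of the hypothesis.

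Finally — though it is not needed for the statement, which is conditional on $\Phi$ returning a global assertion — I would note for completeness that both algorithms always terminate: passing from $\GA$ to $f(\GA)$ in a modifying step strictly decreases the natural number $\sum_{n\in\tree{\GA}}|\varHS{\tree{\GA}}{n}|$. Indeed, strengthening removes the treated variable from the violation set of the amended node and, since a participant's knowledge does not depend on predicates, leaves the violation sets of all other nodes unchanged; propagation removes the treated variable from the violation set of the target node while, as a quick check along the chain shows, the freshly introduced variables are all known to the responsible parties of the nodes that carry them, so no violation is created elsewhere.
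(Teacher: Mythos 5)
Your proof is correct, but it takes a genuinely different route from the paper's. The paper argues \emph{semantically}, step by step: for $\methodoneE$ it notes that the substituted variable is, by construction, known to the responsible party, and for $\methodtwoE$ it does an induction on the length of the $\IOchainPed$-chain to show that the propagated value reaches the responsible party of the violating node — i.e.\ it proves that each amendment actually removes the violation it targets. You instead exploit the \emph{control structure} of $\Phi_1$ and $\Phi_2$: the only terminating branch that yields a genuine global assertion (rather than $\fail{\GA}{n}$) is the one guarded by $\HSProblem(\GA)=\varnothing$, so the conclusion follows once you check that the modifying clauses of $\weaken{\cdot}$ and $\propagate{\cdot}$ can never reproduce their input verbatim (your $v\neq v'$ and freshness arguments, together with Facts~\ref{fact:text} and~\ref{fact:assert}, do establish this). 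Both readings of the statement agree with the paper's intent, since the sentence preceding the theorem restricts it to non-$\fail{\GA}{n}$ outputs. What each approach buys: yours is shorter and insensitive to how the amendment is performed, since the re-checking of $\HSProblem$ in the guard carries all the weight; the paper's argument is more informative, because it shows that each iteration makes genuine progress on the violation it addresses — which is also what underlies termination, and hence the non-vacuity of the hypothesis that $\Phi_i(\GA)$ returns a global assertion at all. Your optional termination measure $\sum_{n\in\tree{\GA}}|\varHS{\tree{\GA}}{n}|$ essentially recovers that missing content (and its justification for $\Phi_2$ is, in substance, the paper's chain argument), so including it is advisable rather than merely "for completeness".
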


\begin{proof}[Proof sketch]
We only consider the cases where the algorithms do return a different tree.
The proof for $\methodoneE$ follows simply from the fact that, at each iteration of the
algorithm, the variable chosen to replace the problematic one is selected so that the
responsible party \emph{knows} it.

The proof for $\methodtwoE$ is by induction on the length of
the $\IOchainPed$-chain at each iteration, and follows from the condition to form such a
chain.
Let $T$ be an assertion tree, $\VEC n = n_1 \ldots n_t$ be the $\IOchainPed$-chain used to solve a 
HS problem at $n \in T$ on a variable $v$.
By construction, the sender of $n_1$ knows $v$, and each variable $v_i$ added at $n_i$ is known to the sender of $n_i$ (by definition of $\mathtt{knows}$).
In addition, the receiver of the $n_t$ is the \emph{responsible party} of $n$, who therefore
knows the variable $v_t$ which replaces $v$ in $n$. 
\qedhere
\end{proof}

%\begin{EX}
% In the global assertion $\GA_{\text{ex}\ref{ex:3}}$, $v$ is  located at $\{\ptp{Alice},\ptp{Bob}\}$ since they know all variables in the initialisation (trivially since $var(10)=\varnothing$) and of the recursion call $\typevar\ENCan{v_2}$. 
% \[\begin{array}{llll}
%    \GA_{\text{ex}\ref{ex:3}} &=& \mu \typevar \ENCan{10}(\TAT{v}{\{\ptp{Alice},\ptp{Bob}\}})\{v>0\}\\
%    && \values{Alice}{Bob}{v_1}{v_1>v}.\\ 
%    && \values{Bob}{Carol}{v_2}{v_2>0}. \\
%    && \values{Carol}{Alice}{v_3}{v_3>v_1}\\
%    && \typevar\ENCan{v_2}
%  \end{array}\]
%The violating nodes are  $\HS(\GA_{\text{ex}\ref{ex:3}})=\{n_1,n_2\}$ where  $\lab{n_1}= \values{Alice}{Bob}{v_1}{v_1>v}$ and $\lab{n_2}=  \values{Carol}{Alice}{v_3}{v_3>v_1}$. 
% The responsible parties are: $\responsible{\tree{\GA_{\text{ex}\ref{ex:3}}}}{n_1}=\ptp{Alice}$ and $\responsible{\tree{\GA_{\text{ex}\ref{ex:3}}}}{n_2}=\ptp{Carol}$. 
%The violating variables for each violating node are: $\varHS{\tree{\GA_{\text{ex}\ref{ex:3}}}}{n_1}=\{v\}$ and  $\varHS{\tree{\GA_{\text{ex}\ref{ex:3}}}}{n_2}=\{v_1\}$.
%\end{EX}

%%% Local Variables: 
%%% mode: latex
%%% TeX-master: "main"
%%% End: 

\newcommand{\gsatE}{\textit{GSat}}
\newcommand{\gsat}[2]{\gsatE(#1,#2)}

\section{Back to the Future}\label{sec:TS}
In a distributed choreography, the local choices made by some parties
may restrict later choices of other parties to the point that no
suitable values is available.
This would lead to an abnormal termination since the choreography
cannot continue.
For global assertions, this distills into \emph{temporal
  satisfiability} (TS) which requires that the values sent in each
interaction do not compromise the satisfiability of future
interactions.
The formal definition of temporal satisfiability is adapted
from~\cite{bhty10}.
\begin{definition}[TS~\cite{bhty10}]\label{def:ts}
  A global assertion $\GA$ \emph{satisfies TS} (in symbols $\TS{\GA}$) iff
  $\gsat{\GA}{\truek}$ holds where
  \[
  \gsat{\GA}{\psi} \textit{iff} {\small\begin{cases}
      \gsat{\GA'}{\psi \land \assertPed{} \iota}, & \text{if}\ \GA =
      \iota.\GA' \text{ and } \psi \ENTAILSb \exists \VARS
      \iota. \assertPed{} \iota
      \\[1pc]
      \displaystyle\bigwedge_{j \in J} \gsat{\GA_j}{\psi \land \psi_j}, &
      \text{if}\ \GA = \branching s r {l_j}{\GA_j}{j\in J}{\psi_j} \text{
        and } \psi \ENTAILSb \displaystyle\bigvee_{j \in J} \left( \psi_j
      \right)
      \\[1pc]
     \gsat{\GA'}{\psi \land \psi'}, & \text{if}\ \GA = \recursion{\VEC e
      }{\VEC{{v}}}{\psi'}.\GA' \text{ or } \GA=\typevar_{\psi'(\VEC v)}\ENCan{\VEC{e}}, \text{ and } \psi \ENTAILSb \psi'[\VEC e / \VEC v]
      \\[1pc]
      \GA =   \End, & \text{otherwise}
    \end{cases}}
  \]
  For an assertion tree $T \in \trees$, $\TS T$ holds iff
  $\gsat{\assertion T}\truek$.
\end{definition}
Intuitively, $\psi$ in $\gsatE$ is equivalent to the conjunction of
all the predicates that precede an interaction.
In the first case, all the values satisfying $\psi$ allow to
instantiate the interaction variables $\carry{\iota}$ so to satisfy
the constraint $\assertPed{}\iota$ of $\iota$.
For branching, $\gsatE$ requires that at least one branch can be
chosen and that each possible path satisfies $\gsatE$.
The recursive definition requires that the initial parameters satisfy the invariant
$\psi'$. In recursive calls, we assume an annotation giving the invariant of the corresponding
recursive definition (i.e.\ $\psi'(\VEC v)$).

Often, TS problems appear when one tries to restrict the domain of a variable after its introduction.
To illustrate this, we introduce the following running example.
\begin{EX}\label{ex:tsex}
  Consider $\GA_{\text{ex}\ref{ex:tsex}}$ below, where $\ptp p$
  constraints $x$ and $y$: {\small
  \[\begin{array}{lll}
    \GA_{\text{ex}\ref{ex:tsex}} & = & \values p q x {x < 10}. \\
    && \values p q y {y > 8}. \\
    &&  \values q p z {x > z  \land\  z> 6 \ \land\  y \neq z}
  \end{array}\]
}
  When $\ptp q$ introduces $z$, both 
$x$ and $y$ are further restricted.
Noticeably, in Example~\ref{ex:tsex}, if $\ptp p$ chooses, e.g.\ $x =
6$ then $\ptp q$ cannot choose a value for $z$.
\end{EX}
Possibly, TS can be regained by rearranging some predicates. In
particular, we can ``lift'' a predicate to a previous interaction
node.
%
%  We then rewrite the predicate located at
% this node in a way that allows us to differentiate the atoms which are
% defined over the newly introduced variable only ($z > 6$, in the
% example) from the others.
%
% The latter atoms can then be partitioned to
% identify the ones that pose temporal satisfiability problems.
% Finally, we can add a predicate at the nodes where the variables
% involved were introduced.  In our example, we would replace the
% predicate at line 1 by $v < 10 \land \exists z.x > z > 6$. For
% instance, this prevents $\ptp{p}$ to choose $v = 6$.
% 
For instance, in Example~\ref{ex:tsex}, one could lift the predicate
$\exists z.x > z > 6$ (adapted from the last interaction)
to the first interaction's predicate.

Without loss of generality, we assume that only one variable is
introduced at the nodes where TS is violated.
Also, we first consider TS violations occurring in interactions and recursive definitions.
Amending violations arising in branching and recursive calls is similar but
complicates the presentation. Hence, for the sake of simplicity, such violations are
considered in \S~\ref{sub:brarec}.

\subsection{Lifting algorithm}
% 
% STEP 1 : FIND THE PROBLEM
% 
We formalise the lifting algorithm.
First, we give a function telling us whether a \emph{node} violates
TS.
\begin{definition}[$\TSnodeE$]
  Given $T \in \trees$, $\TSnode{T}{n}$ holds iff $n \in T$,
 %$\lab n = \iota$ or $\lab{n} = \recursion{\VEC e}{\VEC v}{\psi}$
 and $\TS{T'}$ holds where $T'$ is the assertion tree consisting of the path
  $\uppathPed{n}{T}$ where the children of $n$ (if any) are replaced by nodes
  with label $\End$.
  In addition, we assume that $\TSnodeE$ holds for nodes with label $\TO{s}{r}$.
\end{definition}

We can now define a function that returns a set of nodes violating TS
such that all the previous nodes in the tree do not violate TS.
\begin{definition}[$\TSProblemE$]\label{def:tsbad} The function $\TSProblemE
  : \trees \to \nodes$ is defined as follows:
  \[
  \TSProblem{T} \mmdef
  \left\{n \in T \st 
       \TSnode{T}{n} \text{ is false, and } \TSnode{T}{n'} \ \text{ is true for all }\ n' \in \uppathPed{\parentPed{T}{n}}{T}
  \right\}
  \]
  % for any $n'$ such that $n \prec n'$, if $\TS{n'} = \falsek$ then for all $m$ on the path from  $n$ to $n'$, $\TS{m} = \truek$.
\end{definition}
For instance, in Example~\ref{ex:tsex}, we have that
$\TSProblem{T_{\text{ex}\ref{ex:tsex}}}$ is the singleton
$\{n_{\text{ex}\ref{ex:tsex}}\}$ where $T_{\text{ex}\ref{ex:tsex}} =
\tree{\GA_{\text{ex}\ref{ex:tsex}}}$ and $n_{\text{ex}\ref{ex:tsex}}$
is the node corresponding to the last interaction of
$\GA_{\text{ex}\ref{ex:tsex}}$.
% 
% STEP 2 :REWRITE THE ASSERTION
% 
% Once a node $n \in \TSProblem T$ is chosen, we try to identify a
% variable $v$ (if any) so that $\predPed T n$ is equivalent to
% the conjunction of a predicate constraining $v$ only and some other
% predicate.

Once an \emph{interaction} node $n \in \TSProblem T$ is chosen, we rearrange its predicate
as two sub-predicates such that the first one constraints only the variable introduced at $n$, and 
the second one involves other variables (which have been introduced previously in $T$). 
\begin{definition}[$\rewriteE$]\label{def:rewrite}
  Let $\rewriteE : \predicates \times \variables \rightarrow
  \predicates \times \predicates$ be defined as follows:
  \[
  \rewrite{\psi}{v} = 
    \left( \phi(v) , \psi'(\VEC{w}) \right)
  \]
  where $\psi(\VEC{w}) \iff \phi(v) \land \psi'(\VEC{w})$.
\end{definition}
Note that $\rewriteE$ is a non-deterministic total function as $\phi(v)$ could simply be
$\truek$.
The application of $\rewriteE$ to Example~\ref{ex:tsex} yields
$\rewrite{\assert{n_{\text{ex}\ref{ex:tsex}}}}{\carry{n_{\text{ex}\ref{ex:tsex}}}}
= \left( z > 6 , x > z \land y \neq z \right)$.

\begin{REM}
  For a tree $T \in \trees$ and $n \in \TSProblem{T}$ such that
  $\rewrite{\assertPed{T}{n}}{v} = (\phi,\psi')$, we may have
  $\predPed{T}{n} \not\ENTAILSb \exists v . \phi$. For instance, if the
  predicate defined on $v$ alone is not satisfiable, e.g.,
  $\phi=v<7\land v>7$.
 In this case the algorithm is not applicable.
\end{REM}

% \begin{REM}
%   If $\predPed{T}{n} \not \ENTAILSb \exists{v} A(v)$ then the method of lifting is not applicable and the user should
%   be warned to remove or weaken $A$ because it is intrinsically not satisfiable given the constraints that appears before $n$ in $T$.
% \end{REM}
% \begin{REM}
%   Generally, predicates  $\psi_i(\VEC{w}_i)$ such that $v \notin \VEC{w}_i$ violate temporal satisfiability since they add constraints on variables that were introduced beforehand. As a consequence, this method cannot be applied.
%   However, in the specific case where $\pred{n} \ENTAILSb \psi_i(\VEC{w}_i)$ for all $i$ such that $v \notin \VEC{w}_i$, then these predicates do not constraints the $\VEC{w}_i$ variables further and can therefore stay as they are.
% \end{REM}

% 
% STEP 3 : LOCATE THE PROBLEMATICS ATOMS
% 
We can define a relation among predicates $\psi$ and $\phi$ in a
context $\psi'$ to identify the problematic part of an assertion in an
interaction node.
\begin{definition}[Conflict]\label{def:conflict}
  The predicate $\psi \in \Psi$ \conflict on $\VEC v \subseteq
  \variables$ with $\phi$ \emph{in} $\psi'$ iff
  \[
  \psi' \ENTAILSb \exists \VEC v . \phi
  \quad \text{and} \quad
  \psi' \not\ENTAILSb \exists \VEC v. (\phi \land \psi )
  \]
\end{definition}
Using Def.~\ref{def:conflict} and $\predPed{T}{n}$ (cf.\
\S~\ref{sec:HS}), we define
%\begin{definition}[$\partitionE$]\label{def:part} 
  {\small\[
  \partition{T}{n}{\phi}{\psi} \mmdef \{ 
  \psi' \st \psi \iff \psi' \land \psi'' \; \text{ and } \; \psi' \text{ \conflict on } \VARS{n} \text{ with } \phi \land \psi'' \emph{ in } \predPed{T}{n}
  \}
  \]}
%\end{definition}
  which returns a set of problematic predicates.
  Considering again Example~\ref{ex:tsex}, the application of
  $\partitionE$ yields $
  \partition{T_{\text{ex}\ref{ex:tsex}}}{n_{\text{ex}\ref{ex:tsex}}}{z
    > 6}{x > z \land y \neq z} = \{x > z \} $
  since $y \neq z$ allows to choose a suitable value for $z$.
  
  % 
% STEP 4 : UPDATE THE TREE
% 
  The next definition formalises the construction of a new assertion
  tree which possibly regains TS, given a node and an assertion to be
  ``lifted'' (i.e.\ a ``problematic'' predicate).
\begin{definition}[$\buildE$]\label{def:build}
  The function $\build{T}{n}{\psi}$ returns
  \begin{itemize}
  \item $\hat{T} \in \trees$, if we can construct $\hat{T}$
    isomorphic to $T$ except that, each node
    $n' \in \uppathPed{\parentPed{T}{n}}{T}$ such that
     $\lab{n'} = \values s r {\VEC u}{\theta}$ and $\VEC{u} \cap \VARS{\psi} \neq \varnothing$,
     is replaced by a node $\hat{n}$ with label
      \[
      \values s r {\VEC u} {\theta \ \land \ \forall \VEC{x} . \exists \VEC{y}. \psi} \quad \text{such that} \quad
      \theta \land  \forall \VEC{x} . \exists \VEC{y}. \psi
      \text{ is satisfiable}
      \]
      where
      \begin{itemize}
      \item $\VEC{x} \subseteq \VARS{\psi} \setminus \knows{\ptp s}
        T$ are introduced in a node in $\uppathPed{n'}{T}$
      \item $\VEC{y} \subseteq \VARS{\psi}$ are introduced in a node in
        the subtree rooted at $n'$
      \end{itemize}
      and there is no $n' \in \uppathPed{\parentPed{T}{n}}{T}$ such that 
      $\lab{n'} = \recursion{\VEC e}{\VEC v}{\psi} $
      and $\VEC{v} \cap \VARS{\psi} \neq \varnothing$.
    % and we assume that the variable $\VEC x$ and $\VEC y$ are renamed
    % to avoid variable capture.
  \item $\nada$ otherwise.
  \end{itemize}
\end{definition}
%Essentially, Def.~\ref{def:build} tries to force the assertions prior
%than the one in the problematic node $n'$ to consider the condition
%that responsible party of $n'$ has in its interaction.
\begin{REM}
  In the definition of $\buildE$, we assume that if either $\VEC x$ or $\VEC y$ is empty,
  the corresponding unnecessary quantifier is removed.
  Recall that global assertions are closed (cf.
  \S~\ref{sec:preliminaries}). Therefore all the variables in
  $\VARS{\psi}$ are taken into account in the construction of the new
  assertion tree.
\end{REM}

In Example~\ref{ex:tsex}, we would invoke
$\build{T_{\text{ex}\ref{ex:tsex}}}{n_{\text{ex}\ref{ex:tsex}}}{z > 6
  \land x > z }$ which returns a new assertion tree. The new tree can
be transformed into a global assertion isomorphic to
$\GA_{\text{ex}\ref{ex:tsex}}$ with line 1 updated to:
$\values{p}{q}{x}{x < 10 \land \exists z.x > z > 6}$.

The function $\TSresolveE : \trees \times \nodes \rightarrow \trees \cup \nada$ either solves a TS problem $n$ or returns $\nada$.
\begin{definition}[$\TSresolveE$]\label{def:resolve} 
  Given $T \in \trees$ and $n \in \TSProblem{T}$, we define
  \[
  \TSresolve{T}{n} = 
  \begin{cases}
    \build{T}{n}{\phi \land \psi'}, &
    \text{if } \lab{n} = \iota \text { and } (\phi, \psi) = \rewrite{\assertPed{T}{n}}{\carryPed{T}{n}}\text{ and there is }
    \\
    & \psi' \in  \partition{T}{n}{\phi}{\psi} \text{ s.t. } \build{T}{n}{\phi \land \psi'} \neq \nada
    \\[.5pc]
    \build{T}{n}{\psi[\VEC{e} / \VEC{v}]}, & \text{if } \lab{n} = \recursion{\VEC e}{\VEC v}{\psi}
    \\[.5pc]
    \nada, & \text{otherwise}
  \end{cases}
  \]
\end{definition}
The second case of Definition~\ref{def:resolve} handles TS violations in
recursive definitions. The problem is similar to the interaction case,
but in this case, the values assigned to the recursion parameters are known (i.e., $\VEC e$).
It may be possible to lift the recursion invariant, where we replace the recursion parameters
by the corresponding initialisation vector.
%For a node $n \in T$ that violates TS, and $\lab{n} = \recursion{\VEC e}{\VEC v}{\psi}$, an invocation to $\build{T}{n}{\psi[\VEC{e} / \VEC {v}]}$, not returning $\nada$, solves the problem.
Example~\ref{ex:ts3} illustrates this case.
\begin{EX}\label{ex:ts3}
  For the global assertion $\GA_{\text{ex}\ref{ex:ts3}}$ given below, 
   $\TS{\GA_{\text{ex}\ref{ex:ts3}}}$
  does not hold because $\truek \not\ENTAILSb (x > y > 6)$. 
  \[
  \begin{array}{llll}
    \GA_{\text{ex}\ref{ex:ts3}}& = &\values{p}{q}{x}{\truek}.\\
    && \recursion{8}{y}{x > y > 6}. \GA' \\
  \end{array}
  \]
 However, using the initialisation parameters, we can
  lift $x > 8 > 6$, i.e., the original predicate where we replaced $y$
  by $8$, to the interaction preceding the recursion.  TS now holds in
  the new global assertion (assuming that $\TS{\GA'}$ holds as well).
\end{EX}
\begin{REM}
  In Example~\ref{ex:ts3}, if we had only lifted $x > y > 6$, as in the 
  interaction case, it would not have solved the TS
  problem. Indeed, the predicate of the first interaction would have
  become $\exists y. x > y >6$ which does not exclude values for
  $x$ which are incompatible with the invariant (e.g., $x = 8$).
\end{REM}

% Def.~\ref{def:resolve} allows to state the following proposition.

The overall lifting procedure is given. It relies on a repeated application of $\TSresolveE$ until either the assertion tree validates TS or the function fails to solve the problem. In the latter case, the function returns the most improved version of the tree and the node at which it failed.
\begin{definition}[$\liftPredE$]\label{def:lift}
  $\liftPredE$ is defined as follows, given a global assertion $\GA$. 
  \[
  \liftPred{\GA} = 
  \begin{cases}
    \GA, & \text{if } \TS{\GA} \\
    \liftPred{\TSresolve{\tree{\GA}}{n}}, & \text{if there is } n \in \TSProblem{\tree{\GA}} \text{ s.t. } \TSresolve{\tree{\GA}}{n} \neq \nada \\
    \fail{\GA}{n}, & \text{otherwise}
  \end{cases}
  \]
\end{definition}

\subsection{Applying $\liftPredE$ to branching and recursion}\label{sub:brarec}
\paragraph{Branching.}
% So far, we did not take into account  TS problems that may appear in branching constructs.
According to Def.~\ref{def:ts}, TS fails on branching nodes only when
\emph{all} the branches are not satisfiable.  The underlying idea
being that the architect may want to design their choreography in such
a way that a branch cannot be taken when some variables have a
particular value.

Therefore, the architect should be involved in the resolution of the problem.
Two options are possible; either the disjunction of all the predicates
found in the branches is lifted, or one of the branches predicate is
lifted. Arguably, the latter may also prohibit the other branches to
be chosen, as shown in Example~\ref{ex:ts2}.

\begin{EX}\label{ex:ts2}
  As an illustration, we consider the following assertion:
{\small
  \[
  \begin{array}{llll}
    \GA_{\text{ex}\ref{ex:ts2}}& = & \multicolumn{2}{l}{\values{p}{q}{x}{\truek}.} \\
    && \branchingM{p}{q} &\{ v > 5\} \; l_1 : \GA_1\\
    &&                   &\{ v < 5\} \; l_2 : \GA_2
  \end{array}
  \]
}

  Assuming that $\TS{\GA_1}$ and $\TS{\GA_2}$ hold, we have that
  $\TS{\GA_{\text{ex}\ref{ex:ts2}}}$ does not hold because $\truek \not\ENTAILSb (v > 5 \lor v
  < 5)$. It is obvious that if $v = 5$ no branch may be selected.
\end{EX} 
Let's call $\hat{n}$ the node corresponding to the branching in
the second line of $\GA_{\text{ex}\ref{ex:ts2}}$. Depending on the intention of the architect
the problem could be fixed by one of these invocations to $\buildE$
(where, in both cases, superfluous quantifiers are removed).
\begin{itemize}
  \item $\build{\tree{\GA_{\text{ex}\ref{ex:ts2}}}}{\hat{n}}{v > 5 \lor v < 5}$ replaces the predicate in the first line by $\truek \land (v > 5 \lor v < 5)$
  \item $\build{\tree{\GA_{\text{ex}\ref{ex:ts2}}}}{\hat{n}}{v < 5}$ replaces the predicate in the first line by $\truek \land (v < 5)$.
  \end{itemize}
  Both solutions solve the TS problem, however the second one prevents
  the first branch to be ever taken.

  Given an assertion tree $T$ and a branching node\footnote{We also
    assume that TS is not violated in $\uppathPed{\parentPed{T}{n}}{T}$ as in
    Def.~\ref{def:tsbad}.} $n \in T$ such that TS does not hold. One
  can invoke $\build{T}{n}{\psi}$ where $\psi$ is either the
  disjunction of all the branching predicates or one of the branches
  predicate.  If the function does not return $\nada$, then the TS
  problem is solved. Notice that we do not have to use neither
  $\rewriteE$ or $\partitionE$ to solve problems in branching.

\paragraph{Recursion.}
We have seen that when a TS violation is detected in a recursion
definition, lifting may be applied.
However, lifting a predicate involving a recursion parameter $v$ would require
to strengthen the invariant where $v$ is introduced.
This is quite dangerous, therefore the lifting algorithm does not apply in this
case.
In fact, for recursive definition and calls, Def.~\ref{def:ts}
requires $\psi \ENTAILSb \psi'[\VEC e / \VEC v]$, where $\psi'$ is the
recursion invariant and $\psi$ is the conjunction of the previous
predicates.
Hence, lifting a predicate involving a recursion parameter may
strengthen the invariant, and possibly create a new problem in a
corresponding recursive call.
Moreover, notice that, in recursive calls, $\gsatE$
(Def.~\ref{def:ts}) requires that $\psi \land \psi' \ENTAILSb
\psi'[\VEC e / \VEC v]$; namely, strenghtening $\psi'$ would
automatically strenghten $\psi'[\VEC e / \VEC v]$ and therefore leave
the TS problem unsolved.

%Definition~\ref{def:ts} is simplified wrt to its original definition in~\cite{bhty10}, where we have that 
%since 
%$
%\gsat{\reccall{\VEC e}}{\psi} = \truek \text{ iff } \psi \ENTAILSb \phi[\VEC e / \VEC v]
%$
%and $\phi$ is the invariant corresponding to the definition of $\typevar$. 
%It is obvious that if $\phi$ is strengthened, the implication may not hold anymore, and therefore a new TS problem might be introduced.
%\emic{

On the other hand, TS problems can be solved when they occur in
recursive calls. In fact, let a TS problem appear at a node $n \in
T$ such that $\lab{n} = \reccall{\VEC e}$ and let the invariant of the
definition of $\typevar$ being $\psi(\VEC{v})$, then if the invocation
of $\build{T}{n}{\psi[\VEC{e} / \VEC{v}]}$ succeeds, the problem is
solved.

%\paragraph{Illustration}
\newcommand{\JPR}{Generator}
\newcommand{\JPQ}{Server}
\newcommand{\JPP}{Player}
%\paragraph{Illustration.}
In order to give a more complex example of the application of $\liftPredE$, with TS problems in recursive calls, we
consider the following example.
%
%The game is designed in such a way that some implementation of $\ptp{\JPQ}$ may ``lie'' to $\ptp{\JPP}$. However, we assume that the truth is eventually revealed.
%The first guess of $\ptp{\JPP}$ can be any number but, afterwards, if $\ptp{\JPQ}$ says that $n$ is less (resp.\ greater) than that number, $\ptp{\JPP}$ has to send a smaller (resp.\ greater) number.
%
\begin{EX}\label{ex:ts4}
  Consider the global assertion below
 \[ {\small
  \begin{array}{lllllll}
    \GA_{\text{ex}\ref{ex:ts4}}& = & \multicolumn{3}{l}{\values{\JPR}{\JPQ}{n}{n > 0}.} \\
       &   & \multicolumn{3}{l}{\values{\JPP}{\JPQ}{x}{\truek}.} \\
       &   & \multicolumn{3}{l}{\recursion{x}{r}{ r > 0}.} \\
          &&& \branchingM{\JPQ}{\JPP} &\{ r > n \} \; \textsf{less} :&  \values{\JPP}{\JPQ}{y}{\truek}. \reccall{y}\\
          &&&                         &\{ r < n \} \; \textsf{greater}: & \values{\JPP}{\JPQ}{z}{\truek}. \reccall{z}\\
          &&&                         &\{ r = n \}\; \textsf{win}:& \End
  \end{array}
  } 
\]
modelling a small game where a $\ptp{\JPP}$ has to guess an integer
$n$, following the hints given by a $\ptp{\JPQ}$.
The number is fixed by a $\ptp{\JPR}$.
Each time $\ptp{\JPP}$ sends $\ptp{\JPQ}$ a number, $\ptp{\JPQ}$ says
whether $n$ is less or greater than that number.
\end{EX}
%At first glance, the choreography seems well-formed, but consider the case where %$n$ is $0$, and
%$\ptp{\JPP}$ tries $x \leq 0$ as first guess. 
%If $\ptp{\JPQ}$ ``lies'' and chooses the first branch, it is
%now impossible for $\ptp{\JPP}$ to choose a number $0 \leq y < 0$.
%A naive solution would be to specify $x > 0$, in the second interaction. Unfortunately, this makes the game impossible when $n=0$. We will now see that the solution is more subtle than this.
Let $T_{\text{ex}\ref{ex:ts4}}$ be the tree generated from
$\tree{\GA_{\text{ex}\ref{ex:ts4}}}$.  There is a TS problem at the
node corresponding to the recursive definition, indeed if $x \leq 0$, the
invariant is not respected.  After the first loop of
$\liftPred{T_{\text{ex}\ref{ex:ts4}}}$, the predicate $x > 0$ is added
in the second interaction.  Then, the algorithm loops two more times
to solve the problems appearing before the recursive calls.  It adds
$y > 0$ and $z > 0$ in the interaction of the \emph{less} and
\emph{greater} branch, respectively. The global assertion now
validates temporal satisfiability.

\subsection{Properties of $\liftPredE$}

\newcommand{\converge}[1]{#1 \! \! \! \downarrow}

Similarly to the algorithms of \S~\ref{sec:HS}, $\liftPredE$ does not modify the structure of the tree and preserves the properties of the initial assertion.
\begin{PRO}[Underlying Type Structure - $\liftPredE$]\label{pro:struthree}
  Let $\GA$ be a global assertion. If $\liftPred{\GA}$ returns $\GA'$ then
  $erase(\GA)=erase(\GA')$.\footnote{See Section~\ref{def:erase} for the definition of $erase$.}
\end{PRO}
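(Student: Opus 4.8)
The plan is to prove Proposition~\ref{pro:struthree} by induction on the number of iterations of $\liftPredE$, exploiting the fact that each iteration applies $\TSresolveE$, which itself calls $\buildE$. First I would observe that, by Definition~\ref{def:lift}, if $\liftPred{\GA}$ returns a global assertion $\GA'$ (rather than $\fail{\GA}{n}$), then $\GA'$ is obtained from $\GA$ by a finite chain $\GA = \GA_0, \GA_1, \ldots, \GA_k = \GA'$ where each $\GA_{i+1} = \assertion{\TSresolve{\tree{\GA_i}}{n_i}}$ for some $n_i \in \TSProblem{\tree{\GA_i}}$ with $\TSresolve{\tree{\GA_i}}{n_i} \neq \nada$. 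Since $erase$ is transitive in the obvious sense (equality is transitive), it suffices to show that a single successful application of $\TSresolveE$ preserves the underlying global type, i.e.\ $erase(\assertion{T}) = erase(\assertion{\TSresolve{T}{n}})$ whenever $\TSresolve{T}{n} \neq \nada$.

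Next I would inspect the two non-failure cases of Definition~\ref{def:resolve}: the interaction case $\lab{n} = \iota$ returns $\build{T}{n}{\phi \land \psi'}$, and the recursion-definition case $\lab{n} = \recursion{\VEC e}{\VEC v}{\psi}$ returns $\build{T}{n}{\psi[\VEC e/\VEC v]}$. In both cases the result is a tree produced by $\buildE$, so the key step is to show that $\buildE$ preserves the underlying type. Here I would appeal directly to Definition~\ref{def:build}: the tree $\hat T$ returned by $\build{T}{n}{\psi}$ is \emph{isomorphic to $T$}, and the only modification is that certain nodes $n'$ on the path $\uppathPed{\parentPed{T}{n}}{T}$ with label $\values s r {\VEC u}{\theta}$ have their label replaced by $\values s r {\VEC u}{\theta \land \forall \VEC x.\exists \VEC y.\psi}$. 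Crucially this change affects \emph{only the predicate component} $\theta$ of each such node: the sender $\ptp s$, the receiver $\ptp r$, and the carried interaction variables $\VEC u$ are left untouched (no fresh variables are added, unlike in $\Phi_2$). Likewise the node $n$ itself and all other nodes are unchanged. Since $erase$ strips away precisely the predicate annotations and retains the tree shape together with the sender/receiver/variable/label data at each node, we get $erase(\assertion{T}) = erase(\assertion{\hat T})$.

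Assembling these pieces: by the single-step argument, $erase(\assertion{\tree{\GA_i}}) = erase(\assertion{\TSresolve{\tree{\GA_i}}{n_i}}) = erase(\GA_{i+1})$, and using Fact~\ref{fact:text} to identify $\assertion{\tree{\GA_i}}$ with $\GA_i$, we obtain $erase(\GA_i) = erase(\GA_{i+1})$ for every $i$; transitivity of equality then yields $erase(\GA) = erase(\GA')$. For the base case, when $\TS{\GA}$ already holds, $\liftPred{\GA} = \GA$ and the claim is trivial. I expect the main (mild) obstacle to be a careful reading of Definition~\ref{def:build} to confirm that the replacement of $n'$ really is confined to the predicate slot and introduces no structural change — in particular that the side conditions on $\VEC x$ and $\VEC y$ and the "no recursion parameter in $\VARS{\psi}$" clause are only applicability conditions and do not themselves alter the tree; and to note that the recursion-definition case of $\TSresolveE$ passes $\psi[\VEC e/\VEC v]$ to $\buildE$, which again only augments predicates of earlier interaction nodes and never touches recursion nodes (that case would return $\nada$). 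The argument is otherwise entirely routine, which is why a proof sketch by induction on iterations, reducing to the shape-preservation property of $\buildE$, is the natural and sufficient presentation.
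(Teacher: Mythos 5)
Your proposal is correct: the heart of the matter is exactly the observation you isolate, namely that $\buildE$ only ever rewrites the predicate slot $\theta$ of interaction labels on the path above the problematic node, leaving senders, receivers, carried variables $\VEC u$, branch labels and the tree shape untouched (in contrast to $\methodtwoE$, which adds fresh variables), so $erase$ is invariant under each successful call to $\TSresolveE$ and hence under the whole run of $\liftPredE$. Your route differs from the paper's in its decomposition: the paper argues by structural induction on $\GA$ (mirroring the proof of Proposition~\ref{prop:str}), essentially pushing the "only predicates change" observation through the syntax of global assertions, whereas you do induction on the number of iterations of $\liftPredE$, reducing everything to a single-step lemma about $\buildE$ and then chaining equalities. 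Your organisation is arguably the better fit for this algorithm, since $\liftPredE$ is not structurally recursive but rewrites labels along a path of the tree, and the single-step lemma makes explicit where the invariant is established; the paper's structural sketch buys brevity and uniformity with the earlier proposition but leaves that step implicit. One minor point: the paper's convention (stated for $\Phi_1,\Phi_2$ and implicitly reused here) counts a return of $\fail{\GA''}{n}$ as "returning" the partially amended $\GA''$; your argument covers this case with no change, since the chain of successful $\TSresolveE$ steps up to the failure point preserves $erase$ in exactly the same way, but it is worth saying so explicitly rather than restricting attention to the non-failure outcome.
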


\begin{proof}[Proof sketch]
The proof is by induction on the structure of $\GA$, similarly to the one of Propostion~\ref{prop:str}.
\qedhere
\end{proof}
Also, $\liftPredE$ does not introduce new HS or TS problems.
\begin{PRO}[Properties Preservation - $\liftPredE$]
  Assume $\liftPred{\GA} = \GA'$.
  If $\HSProblem(\GA) = \varnothing$ then $\HSProblem(\GA') = \varnothing$,
  and 
  if $\TSProblem{\GA} = \varnothing$ then $\TSProblem{\GA'} = \varnothing$.
\end{PRO}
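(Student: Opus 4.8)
The plan is to proceed by induction on the number of (recursive) invocations of $\TSresolveE$ carried out by $\liftPredE$ on input $\GA$. In the base case $\liftPredE$ rewrites nothing and returns either $\GA$ (first case of Def.~\ref{def:lift}) or $\fail{\GA}{n}$ (third case), where the claim is immediate. For the inductive step it suffices to consider a single rewriting $T \to \TSresolve{T}{n}$ with $T = \tree{\GA}$ and $n \in \TSProblem{T}$, verify that this step preserves both $\HSProblem(\cdot) = \varnothing$ and $\TSProblem{\cdot} = \varnothing$, and then apply the induction hypothesis to $\assertion{\TSresolve{T}{n}}$.

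For the TS part the argument is essentially free: $\liftPredE$ alters the assertion tree only through $\TSresolveE$, which is invoked only at nodes of $\TSProblem{\tree{\GA}}$; hence if $\TSProblem{\GA} = \varnothing$ no rewriting takes place, $\liftPredE$ returns $\GA' = \GA$, and $\TSProblem{\GA'} = \varnothing$. Spelled out, this relies on the routine fact that $\TSProblem{\tree{\GA}} = \varnothing$ coincides with $\TS{\GA}$, obtained by unfolding the definitions of $\TSProblemE$, $\TSnodeE$ and $\gsatE$ (and used implicitly already to make $\liftPredE$ well defined).

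For the HS part the two ingredients I would use are: (i) by Proposition~\ref{pro:struthree}, $\liftPredE$ never changes the tree shape, so the sets $\knows{p}{\uppathPed{m}{T}}$ are invariant under every step; and (ii) the only labels that a single application of $\TSresolveE$ rewrites are those of the \emph{interaction} ancestors $n' \in \uppathPed{\parentPed{T}{n}}{T}$ treated by $\buildE$ (Def.~\ref{def:build}), each of whose predicate $\theta$ is replaced by $\theta \land \forall \VEC{x}.\exists\VEC{y}.\psi$. It then remains to check, directly from Def.~\ref{def:build}, that $\VARS{\forall \VEC{x}.\exists\VEC{y}.\psi} \subseteq \knows{s}{\uppathPed{n'}{T}}$ where $\ptp s = \responsible{T}{n'}$: since $\GA$ is closed, every variable of $\VARS{\psi}$ is introduced at a unique node lying either on $\uppathPed{n'}{T}$ or in the subtree rooted at $n'$; those on $\uppathPed{n'}{T}$ unknown to $\ptp s$ are exactly $\VEC x$, those in the subtree go into $\VEC y$, and the remaining free variables are introduced on $\uppathPed{n'}{T}$ and known to $\ptp s$, hence already in $\knows{s}{\uppathPed{n'}{T}}$. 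Combined with $\VARS{\theta} \subseteq \knows{s}{\uppathPed{n'}{T}}$, which holds because $\HSProblem(\GA) = \varnothing$, this shows the rewritten node is not an HS violation of $\assertion{\TSresolve{T}{n}}$; all other nodes are untouched, so $\HSProblem(\assertion{\TSresolve{T}{n}}) = \varnothing$ and the induction closes.

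The step I expect to be the main obstacle is the verification in (ii): one must be careful that a variable known to $\ptp s$ \emph{somewhere} in $T$ and surviving the quantifier prefix is in fact known \emph{along the path} $\uppathPed{n'}{T}$. This works out because the side conditions of $\buildE$ exclude from $\psi$ any variable bound by a recursion definition that is an ancestor of $n$, so every variable of $\VARS{\psi}$ is an ordinary interaction variable introduced once, at a node that then lies on $\uppathPed{n'}{T}$ and has $\ptp s$ as sender or receiver. The one point not to overlook is that the recursion-definition clause of $\TSresolveE$ (second case of Def.~\ref{def:resolve}) lifts $\psi[\VEC e/\VEC v]$ rather than $\psi$, so the recursion parameters are replaced by the initialisation vector $\VEC e$ before the above analysis applies; and $\buildE$ only returns a tree when the strengthened predicate stays satisfiable, which is what keeps the amended assertion well formed.
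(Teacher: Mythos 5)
Your proposal is correct and follows essentially the same route as the paper's own (much terser) sketch: TS preservation is immediate because $\TSProblem{\GA}=\varnothing$ means $\TS{\GA}$ holds and the first case of Def.~\ref{def:lift} returns $\GA$ unchanged, while HS preservation holds because every variable of the lifted predicate not known to the responsible sender is bound by the $\forall\VEC{x}.\exists\VEC{y}$ prefix introduced by $\buildE$. Your additional care about knowledge along the path versus in the whole tree, and about the recursion-parameter side condition of $\buildE$, just fills in details the paper leaves implicit.
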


\begin{proof}[Proof sketch]
The preservation of HS follows from the fact that all the variables
which are not known to a participant are quantified (either universally or existentially)
in the modified predicates.
The proof of TS preservation follows trivially from the first case of Def.\ref{def:lift}.
\qedhere
\end{proof}

In addition, we have that $\liftPredE$ preserves the domain of
possible values for each variable from the initial assertion.
\begin{PRO}[Assertion predicates]\label{prop:asspred} If $\liftPred{\GA} = \GA'$ then for all $n\in
  \tree{\GA}$ such that $n$ is a leaf, and its corresponding node
  $n'\in \tree{\GA'}$ (cf. Proposition~\ref{pro:struthree})
  \[\predPed{T}{n}\iff \predPed{T'}{n'}\]
\end{PRO}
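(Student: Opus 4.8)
The plan is to induct on the length of the (finite, by hypothesis) recursive unfolding through which $\liftPredE$ turns $\GA$ into $\GA'$. If $\TS{\GA}$ holds then $\GA'=\GA$, $\tree{\GA}=\tree{\GA'}$, and the claim is immediate. Otherwise $\liftPred{\GA}=\liftPred{\assertion{\hat{T}}}$, where $\hat{T}=\TSresolve{\tree{\GA}}{n}\neq\nada$ for some $n\in\TSProblem{\tree{\GA}}$, and the induction hypothesis applied to $\assertion{\hat{T}}$ reduces the goal to showing that this one step preserves, up to logical equivalence, the conjunction of the predicates along the path to every leaf. (Throughout, $T=\tree{\GA}$, and the leaves of $\hat{T}$ are matched to those of $T$ through the isomorphism underlying Prop.~\ref{pro:struthree}.)

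By Def.~\ref{def:build}, which is invoked in either clause of Def.~\ref{def:resolve}, $\hat{T}$ is isomorphic to $T$ and differs only in the labels of some \emph{interaction} ancestors $n'$ of $n$: the predicate $\theta$ of such an $n'$ becomes $\theta\land\forall\VEC{x}.\exists\VEC{y}.\,\psi$, where $\psi$ is a conjunctive factor of $\assertPed{T}{n}$ in the interaction case (so $\assertPed{T}{n}\ENTAILSb\psi$) and is the recursion invariant instantiated with its initialisation vector in the recursive-definition case; moreover $\VEC{y}$ collects the variables of $\psi$ introduced strictly below $n'$, and $\VEC{x}$ the remaining variables of $\psi$ not known to the responsible party of $n'$. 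Fix a leaf $m$ of $T$ and its image $m'$ in $\hat{T}$. Since every predicate of $\hat{T}$ is obtained from the corresponding one of $T$ by conjoining further formulae, and the label of $n$ itself is untouched, $\predPed{\hat{T}}{m'}\ENTAILSb\predPed{T}{m}$ is immediate; the real content is the converse, namely that each added conjunct $\forall\VEC{x}.\exists\VEC{y}.\,\psi$ appearing on the path to $m'$ is already entailed by $\predPed{T}{m}$.

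To get the converse entailment I would argue as follows. Each added conjunct is closed over $\VEC{x}\cup\VEC{y}$, so its free variables are variables of $\psi$ introduced at or above $n'$ and known to the responsible party of $n'$; as $n'$ is an ancestor of both $n$ and $m$, these lie on the shared prefix of the two paths and are therefore already constrained by $\predPed{T}{m}$. When $n$ lies on the path to $m$ (in particular whenever $m$ is a descendant of $n$), $\assertPed{T}{n}$ — hence $\psi$ — is itself a conjunct of $\predPed{T}{m}$, and the values taken along the path by the variables in $\VEC{y}$ witness the existential block; because $\GA$ is closed, every variable of $\psi$ is introduced on the path through $n$, which is what should let the universal block $\forall\VEC{x}$ be discharged against the constraints already recorded in $\predPed{T}{m}$ together with the satisfiability side-condition of $\buildE$. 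When $n$ is not on the path to $m$, the variables of $\psi$ introduced below the branching node where the two paths part are exactly those placed in $\VEC{y}$, so the added conjunct collapses to a constraint over the shared prefix, subsumed there by $\predPed{T}{m}$. The recursive-definition case is handled analogously, reading the initialisation vector for the variables introduced below $n'$.

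The step I expect to be the main obstacle is discharging the universal quantifier $\forall\VEC{x}$: I must show that replacing a plain requirement at $n$ by a \emph{robustness} requirement at an ancestor $n'$ rules out no run that was admissible in $\GA$, and this is precisely where closedness of $\GA$ (every variable of $\psi$ is introduced on the path through $n$) and the satisfiability condition built into Def.~\ref{def:build} are the load-bearing hypotheses. The rest — well-foundedness of the induction, and the fact that the node correspondence of Prop.~\ref{pro:struthree} restricts to leaves — is routine.
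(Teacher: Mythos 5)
Your proposal follows essentially the same route as the paper's own (very short) proof sketch. The paper's entire argument is the observation that $\liftPredE$ only \emph{duplicates} predicates: every conjunct that $\buildE$ adds at an ancestor $n'$ is a quantified copy of a sub-predicate already present at the problematic node, so the conjunction of predicates along a root-to-leaf path acquires no new constraints and stays logically equivalent. Your decomposition --- induction on the number of $\TSresolveE$ iterations, the trivial direction $\predPed{\hat{T}}{m'}\ENTAILSb\predPed{T}{m}$, and the converse reduced to showing each added conjunct $\forall\VEC{x}.\exists\VEC{y}.\psi$ is entailed by $\predPed{T}{m}$ --- is an explicit rendering of exactly that observation. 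The step you flag as the main obstacle (discharging the $\forall\VEC{x}$ block, and handling a leaf $m$ lying in a branch that does not contain the problematic node $n$) is precisely the content the paper compresses into the phrase ``the lifting algorithm does not add any new constraints in the conjunction of the predicates found on the path from the root to a leaf'' and does not argue further; in the situation covered by all of the paper's worked examples ($\VEC{x}=\varnothing$ and $m$ below $n$) your witness argument closes it, and the paper offers nothing beyond that for the general case. So your attempt is no less complete than the published proof sketch; its value is that it makes visible where a fully detailed proof would still have to do work (the universal block over variables unknown to the sender of $n'$, and off-path leaves), points the paper leaves implicit rather than resolves.
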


\begin{proof}[Proof sketch]
The proof follows from the observation that predicates are only duplicated in the tree,
i.e.\ the lifting algorithm does not add any new constraints in the conjunction
of the predicates found on the path from the root to a leaf.
\qedhere
\end{proof}

Finally, % we have that the algorithm effectively solves TS problems.
Proposition~\ref{pro:tsres} establishes an intermediate result for the
correctness of $\liftPredE$. It says that a successful invocation of
$\TSresolveE$ on a node removes the problem at that node.
\begin{PRO}[Correctness - $\TSresolveE$]\label{pro:tsres}
  Let $T$ be an assertion tree, and $N = \TSProblem{T}$.
  For each $n \in N $ such that $\TSresolve{T}{n} \neq \nada$, then  $n \notin \TSProblem{\TSresolve{T}{n}}$.
\end{PRO}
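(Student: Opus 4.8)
The plan is to argue by cases on $\lab{n}$. Since $\TSresolve{T}{n} = \nada$ unless $\lab{n}$ is an interaction or a recursive definition, only these two cases matter; write $T'' = \TSresolve{T}{n}$. First I would record a structural fact from Def.~\ref{def:build}: $\buildE$ rewrites only interaction-labelled nodes of $\uppathPed{\parentPed{T}{n}}{T}$ and leaves $n$ and the subtree below it untouched, so $T''$ has the same shape as $T$, $\assertPed{T''}{n} = \assertPed{T}{n}$, and $T''$ differs from $T$ only by conjoining a quantified copy of the lifted predicate to some proper ancestors of $n$. By Def.~\ref{def:tsbad}, $n \in \TSProblem{T''}$ would force $\TSnode{T''}{n}$ to be false \emph{and} every proper ancestor $m$ of $n$ to satisfy $\TSnode{T''}{m}$; hence it suffices to prove the disjunction: \emph{either} $\TSnode{T''}{n}$ holds, \emph{or} some proper ancestor $m$ of $n$ has $\TSnode{T''}{m}$ false. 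I would then walk down $\uppathPed{n}{T''}$ unfolding Def.~\ref{def:ts}: an unmodified node cannot make the $\gsatE$-step fail, since in $T''$ its context is only stronger than in $T$ while its assertion is unchanged; so if the walk fails it fails at a modified (hence interaction) ancestor, giving the second disjunct. Otherwise all steps up to $n$ succeed, and it remains to establish the step at $n$, namely $\predPed{T''}{n} \ENTAILSb \exists \VARS{n}. \assertPed{T''}{n}$ if $\lab{n}$ is an interaction, or $\predPed{T''}{n} \ENTAILSb \psi[\VEC{e}/\VEC{v}]$ if $\lab{n} = \recursion{\VEC{e}}{\VEC{v}}{\psi}$.

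For the \emph{recursive-definition case}, $T'' = \build{T}{n}{\psi[\VEC{e}/\VEC{v}]}$. By closedness of $\GA$ and the last clause of Def.~\ref{def:build} (which $\buildE \neq \nada$ enforces), every free variable of $\psi[\VEC{e}/\VEC{v}]$ is an interaction variable introduced on the path strictly above $n$, and $\buildE$ has conjoined $\forall \VEC{x}. \exists \VEC{y}. (\psi[\VEC{e}/\VEC{v}])$ to every ancestor introducing one of these variables. I would then argue by induction along that path, starting from the topmost such ancestor — at which $\VEC{x}$ is empty, since no free variable of $\psi[\VEC{e}/\VEC{v}]$ lies above it — that the context predicate in $T''$ at each subsequent node entails $\psi[\VEC{e}/\VEC{v}]$ with progressively fewer quantifiers, so that at $n$ we obtain $\predPed{T''}{n} \ENTAILSb \psi[\VEC{e}/\VEC{v}]$ with no residual quantifier, which is exactly the step at $n$.

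For the \emph{interaction case}, let $(\phi, \psi) = \rewrite{\assertPed{T}{n}}{\carryPed{T}{n}}$, so $\assertPed{T}{n} \iff \phi \land \psi$; let $\psi'$ be the element of $\partition{T}{n}{\phi}{\psi}$ used by $\TSresolveE$, and split $\psi \iff \psi' \land \psi''$. From Def.~\ref{def:conflict} applied to $\psi'$ we get $\predPed{T}{n} \ENTAILSb \exists \VARS{n}. (\phi \land \psi'')$, which survives in $T''$ since $\predPed{T''}{n}$ only adds conjuncts. On the other side, $\buildE$ has conjoined $\forall \VEC{x}. \exists \VEC{y}. (\phi \land \psi')$ to each ancestor introducing a variable of $\phi \land \psi'$, with $\VEC{y}$ collecting exactly the variables of $\phi \land \psi'$ introduced strictly below that ancestor — in particular $\carryPed{T}{n}$. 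By the same induction-along-the-path as above, $\predPed{T''}{n} \ENTAILSb \exists \VARS{n}. (\phi \land \psi')$. Combining this with the $\psi''$-entailment and $\phi \land \psi' \land \psi'' \iff \phi \land \psi \iff \assertPed{T''}{n}$ yields the step at $n$. (If no admissible $\psi'$ gives $\build{T}{n}{\phi \land \psi'} \neq \nada$ then $\TSresolve{T}{n} = \nada$, contrary to hypothesis.)

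The step I expect to require the most care is the induction along the path sketched above: showing that the $\forall$/$\exists$ prefixes that $\buildE$ attaches at successive ancestors cancel out by the time one reaches $n$. This uses closedness of $\GA$ (every free variable of the lifted predicate is introduced on the path), the exclusion of recursion parameters from the lifted predicate, and — in the interaction case — the further fact that $\partitionE$ isolates $\psi'$ as precisely the part of $\assertPed{T}{n}$ that conflicts, on $\VARS{n}$ and in the context $\predPed{T}{n}$, with the remainder, so that the witness for $\VARS{n}$ provided for $\phi \land \psi'$ is also good for $\phi \land \psi''$. The only other delicate point — that conjoining the lifted predicate to an ancestor might re-create a TS violation strictly above $n$ — is harmless: the two-disjunct reduction of the first paragraph then applies, and $n \notin \TSProblem{T''}$ still follows from Def.~\ref{def:tsbad}.
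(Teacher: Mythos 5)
Your proposal follows essentially the same route as the paper's proof sketch: unfold $\buildE$ to see which quantified copies of the lifted predicate are conjoined into $\predPed{}{n}$, note that the deepest modified ancestor contributes a copy in which only the variable introduced at $n$ is existentially quantified, and combine this with the entailment supplied by Def.~\ref{def:conflict} to conclude that the $\gsatE$-check at $n$ succeeds, so $n \notin \TSProblem{\TSresolve{T}{n}}$; your additional handling of the recursive-definition case and of possible new violations strictly above $n$ (harmless by Def.~\ref{def:tsbad}) only makes explicit what the paper glosses over. The witness-transfer point you flag as the delicate step is treated in the paper at the same informal level (a brief contradiction argument), so your sketch is on par with the published one.
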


\begin{proof}[Proof sketch]
We sketch the key part of the proof, i.e.\ the proof of the correctness of $\buildE$ for interaction nodes.

Let $T$ be an assertion tree with a node $n$ such that $n \in \TSProblem{T}$,
and $\lab{n} = \values s r v {\phi \land \beta \land \gamma}$ such that $\beta$ \conflict on $\VARS{n}$ with $\phi \land \gamma$ in $\predPed{T}{n}$.
Then $\phi \land \gamma$ is the predicate to be lifted.
Assume $\hat{T} = \build{T}{n}{\phi \land \beta}$.

By Def.\ref{def:build}, we have that, for suitable
$\VEC x_1 , \VEC y_1 \ldots \VEC x_k, \VEC y_k$,
\begin{align}\label{eq:predcon}
\predPed{\hat{T}}{n}  & =  
\predPed{T}{n} \land
\forall \VEC x_1 .\exists \VEC y_1  .(\phi \land \beta) \sigma_1 \land \ldots \land
\forall \VEC x_k .\exists \VEC y_k .(\phi \land \beta) \sigma_k \\
& \iff   \forall \VEC x_1 \ldots \VEC x_k .
\predPed{T}{n} \land
\exists \VEC y_1  .(\phi \land \beta) \sigma_1 \land \ldots \land
\exists \VEC y_k .(\phi \land \beta) \sigma_k 
\end{align}
Where we assume $k$ substitutions $\sigma_i$ such that the variables bound
by $\forall \VEC x_i .\exists \VEC y_i$ in $\phi \land \beta$ are pairwise
distinct.
We have that a quantified version of $\phi \land \beta$ is added $k$ times
in the assertion tree, above $n$.

Note that there must be a $i$ such that
$\exists \VEC y_i .(\phi \land \beta) \sigma_i \iff \exists v . (\phi \land \beta)$.
Indeed, the variables which are quantified \emph{existentially} are the ones that
($i$) appear in $\phi \land \beta$, and ($ii$) are fixed \emph{below} in tree.
Therefore, the predicate which is added in the last node before $n$ must quantify
existentially $v$, only. If there were another variable to be quantified existentially
then it would not be the last node to be updated.

By Def.\ref{def:build}, we also know that every
$\exists \VEC y_i  .(\phi \land \beta) \sigma_i$
is satisfiable.

By the definition of conflict (Def.\ref{def:conflict}), we have that
$\predPed{T}{n} \ENTAILSb \exists v. (\phi \land \gamma)$ and
$\predPed{T}{n} \not\ENTAILSb \exists v. (\phi \land \beta)$ (hence, $\predPed{T}{n}$ is satisfiable).
Therefore, by weakening, we have that
\begin{equation}\label{eq:conf1}
  \predPed{\hat{T}}{n} \ENTAILSb \exists v. (\phi \land \gamma)
\end{equation}
By~\eqref{eq:predcon}, we have that
\begin{equation}\label{eq:predhat}
  \predPed{\hat{T}}{n} \ENTAILSb \exists v . (\phi \land \beta)
\end{equation}
since $\exists v . (\phi \land \beta)$ (modulo renaming) is one
of the conjuncts of $\predPed{\hat{T}}{n}$.

TS must hold for $n$, which implies that $n \not\in \TSProblem{\hat{T}}$ and
$\TSnode{{\hat{T}}}{n}$ holds, i.e.\
\[ %\begin{equation}
  \predPed{\hat{T}}{n} \ENTAILSb \exists v. (\phi \land \beta \land \gamma)
\] %\end{equation}

Otherwise, that would imply that
\[ %\begin{align}
  \predPed{\hat{T}}{n} \land \forall v. (\neg\phi \lor \neg\beta \lor \neg\gamma)
\] %\end{align}
which is in contradiction with~\eqref{eq:conf1} ($\phi$ and $\gamma$)
and~\eqref{eq:predhat} ($\beta$).
\qedhere
\end{proof}

Finally, we can say that, if a repeated application of lifting succeeds, the global assertion which is returned satisfies temporal satisfiability.
\newcommand{\distance}[1]{| #1 |}
\begin{theorem}[Correctness - $\liftPredE$]\label{pro:lift}
$ \text{If } \liftPred{\GA} = \GA' \text{ then } \TSProblem{\GA'} = \varnothing$.
\end{theorem}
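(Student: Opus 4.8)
The plan is to exploit the shape of Definition~\ref{def:lift}. Among its three clauses, only the first returns an object that is a genuine global assertion: the second clause recurses, and the third returns a failure term $\fail{\GA}{n}$, which is not a global assertion in the sense of Definition~\ref{def:GA}. Hence, whenever $\liftPred{\GA}=\GA'$ with $\GA'$ a global assertion, the evaluation of $\liftPredE$ must terminate through the first clause, whose guard is exactly $\TS{\cdot}$. So I would argue by induction on the number of recursive unfoldings used to compute $\liftPred{\GA}$ --- a finite number, since the statement assumes the computation converges --- reducing the theorem to one auxiliary fact relating $\TS{\cdot}$ with $\TSProblemE$.

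First I would establish that fact: for every $T\in\trees$, if $\TS{T}$ holds then $\TSProblem{T}=\varnothing$ (only this direction is needed). It follows by putting Definition~\ref{def:ts} side by side with the definitions of $\TSnodeE$ and of $\TSProblemE$ (Definition~\ref{def:tsbad}). Unfolding $\gsat{\assertion{T}}{\truek}$ descends $T$ along every root-to-node path and carries, in its second argument, exactly the conjunction $\predPed{T}{n}$ of the predicates met strictly above $n$; upon reaching $n$ it performs precisely the entailment that $\TSnode{T}{n}$ demands of the path-tree obtained from $\uppathPed{n}{T}$ by replacing the children of $n$ with $\End$-labelled nodes (recalling that $\TSnodeE$ is assumed to hold at nodes labelled $\TO{s}{r}$, and that $\gsat{\End}{\cdot}$ is always true). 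Therefore $\TS{T}$ forces $\TSnode{T}{n}$ to hold for every $n\in T$, so no node can satisfy the defining condition of $\TSProblemE$, i.e.\ $\TSProblem{T}=\varnothing$. The delicate point, which I expect to be the main obstacle, is the bookkeeping of this accumulated predicate: checking that what $\gsatE$ carries when it reaches $n$ is indeed $\predPed{T}{n}$, uniformly across the prefix, branching, recursive-definition and recursive-call cases of Definition~\ref{def:ts}; the argument is routine but a careless mismatch in ``which predicates lie above $n$'' would break it.

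With the auxiliary fact in hand, the theorem is immediate. In the base case the computation of $\liftPred{\GA}$ returns through the first clause, so $\GA'=\GA$ and $\TS{\GA}$ holds; the auxiliary fact then gives $\TSProblem{\GA'}=\varnothing$. In the inductive step it returns through the second clause, i.e.\ $\GA'=\liftPred{\assertion{\TSresolve{\tree{\GA}}{n}}}$ for some $n\in\TSProblem{\tree{\GA}}$ with $\TSresolve{\tree{\GA}}{n}\neq\nada$ (the tacit identification of a tree with its associated assertion being legitimate by Facts~\ref{fact:text} and~\ref{fact:assert}); this inner computation uses strictly fewer unfoldings, so the induction hypothesis yields $\TSProblem{\GA'}=\varnothing$. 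The third clause cannot occur under the hypothesis, since it returns $\fail{\GA}{n}$, which is not a global assertion.

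It is worth noting that this argument needs neither Proposition~\ref{pro:tsres} nor any termination guarantee for $\liftPredE$: the claim is conditional on convergence, and the guard of the terminating clause of Definition~\ref{def:lift} already \emph{is} temporal satisfiability. Proposition~\ref{pro:tsres}, together with a measure showing that $\TSresolveE$ introduces no new violations above the repaired node, is instead what one would use to prove the complementary progress property --- that the repeated application of $\TSresolveE$ actually terminates.
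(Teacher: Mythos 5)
Your proposal is correct in substance but takes a genuinely different route from the paper's. The paper argues operationally: by induction on the number of violating nodes and their minimal depth, using Proposition~\ref{pro:tsres} to show that each successful call to $\TSresolveE$ removes the chosen violation, that any freshly created violations appear strictly closer to the root, and that the loop cannot cycle because only finitely many (sub)predicates can be lifted --- in other words, the paper really proves a progress-and-termination property of the repair loop. You instead observe that the statement, as written, is conditional on $\liftPredE$ converging to a genuine global assertion, which can only happen through the first clause of Def.~\ref{def:lift}, whose guard is exactly $\TS{\GA}$; hence the theorem reduces to the single lemma ``$\TS{T}$ implies $\TSProblem{T}=\varnothing$'', and neither Proposition~\ref{pro:tsres} nor any termination argument is needed. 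This is a sharper reading of what the statement actually asserts (partial correctness of the loop), and your closing remark correctly relocates the paper's inductive machinery to the complementary progress claim, which the statement does not make.

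One caveat concerns the auxiliary lemma, which you present as routine bookkeeping: taken literally it is not quite true, and the delicate point is not the accumulation of $\predPed{T}{n}$ but branching. For a node at or below a branch label $\{\psi_j\}l_j$, the path tree used by $\TSnodeE$ retains a single branch, so $\gsatE$ on it demands $\psi\ENTAILSb\psi_j$ for that particular branch, whereas $\gsatE$ on the full tree only demands the disjunction $\psi\ENTAILSb\bigvee_{j\in J}\psi_j$. With guards such as $x>5$ and $x\leq 5$ under a trivial context, $\TS{T}$ holds while the branch-label nodes (and all nodes beneath them) fail $\TSnodeE$, so $\TSProblem{T}\neq\varnothing$. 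This is arguably a glitch in the paper's own definitions --- its statement and proof silently assume the same equivalence, and \S~\ref{sub:brarec} makes clear that an individually unselectable branch is not meant to count as a TS violation --- but since your argument routes everything through this lemma, you should either prove it under the reading in which branch guards along the path are assumed rather than re-checked, or restrict the node-wise claim to interaction and recursion nodes. (A smaller inaccuracy: what $\gsatE$ carries is not literally $\predPed{T}{n}$, since recursion invariants are accumulated by $\gsatE$ but not by $\predE_T$; this is harmless for your purposes because both sides of your comparison are $\gsatE$-based.)
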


\begin{proof}[Proof sketch]
The proof is by induction on the number of problematic
nodes and the minimum depth of these nodes in the tree. It relies on Proposition~\ref{pro:tsres},
i.e.\ the fact that $\TSresolve{T}{n}$ either solves the
problem at $n$ or fails.

Let $T = \tree{\GA}$ and $N$ be the set of nodes in $T$ which violates TS.
We write $\distance{n}$ for the depth of $n$ in $T$ (with $\distance{\treeroot{T}} = 0$).
\begin{enumerate}
\item If $N = \varnothing$, then $T$ is TS.
\item If $N \neq \varnothing$, let $n \in \TSProblem{T} \subseteq N$, after an invocation
  to $\TSresolve{T}{n}$, we have
  \begin{enumerate}
  \item If $\distance{n} > 1$ then either
    \begin{enumerate}
    \item $N := N \setminus \{n\}$, i.e.\ the node is simply removed
      from the set of problematic nodes,
    \item \label{it:boom} $N := N \cup N' \setminus \{n\}$
      with $\forall n'_i \in N' . \, \distance{n_i'} < \distance{n}$,
      i.e.\ the problem at $n$ is solved but other problematic nodes,
      \emph{above} $n$ in $T$, are added,
      or,
    \item the algorithm fails on $n$
    \end{enumerate}
  \item If $\distance{n} \leq 1$ then either $N := N \setminus \{n\}$, or 
    the algorithm fails. In fact, once the algorithm reaches a problem located at a child of the root, then it either fails or solves the problem. Indeed, there cannot be a TS problem at the root node unless the predicate is unsatisfiable (see Def.\ref{def:ts}), in which case, the algorithm fails.
  \end{enumerate}
  Note that selecting $n \in \TSProblem{T}$ implies that the depth of $n$ is smaller or equal to
the depth of the nodes in $N$.
\end{enumerate}
It can be shown by induction that the algorithm terminates either with $\TSProblem{T} = \varnothing$, or a failure.

Regarding step~\ref{it:boom}, note that the algorithm cannot loop on a problematic
node indefinitely. Indeed, the number of (sub)predicates available for lifting is
finite and, by Def.\ref{def:conflict}, the algorithm moves only the
predicates from which the problem originates, e.g.\ an equivalent constraint cannot be
lifted twice.
\qedhere
\end{proof}

%%% Local Variables: 
%%% mode: latex
%%% TeX-master: "main"
%%% End: 

%\input{properties}
\section{A methodology for amending choreographies}\label{sec:methodo}
%
% Algo instead of implementation
%
%
The algorithms $\methodoneE$, $\methodtwoE$, and $\liftPredE$ in
\S~\ref{sec:HS} and \S~\ref{sec:TS} can be used to support a
methodology for amending contracts in choreographies.
The methodology mainly consists of the following steps: $(i)$ the
architect design a choreography $\widehat \GA$,
$(ii)$ the architect is notified if there are any HS or TS problems in $\widehat \GA$,
$(iii)$ using $\methodoneE$ and $\methodtwoE$ solutions may be offered for HS problems,
while $\liftPredE$ can be used to offer solutions and/or hints on how to solve TS problems;
$(iv)$ the architect picks one of the solutions offered in $(iii)$. Steps $(ii)$ to
$(iv)$ are repeated until all the problems have been solved. 
We sketch our methodology using the following global assertion:
  \[
  \begin{array}{llllll}
  \widehat\GA \quad = \quad
  {\recursion{10}{v}{v>0}.}\\
  \qquad \qquad \qquad {\values{Alice}{Bob}{v_1}{v \geq v_1}.}\\ 
  \qquad \qquad \qquad {\values{Bob}{Carol}{v_2}{v_2>v_1}.} \\
  \qquad \qquad \qquad {\values{Carol}{Alice}{v_3}{v_3>v_1}.}\\
  \qquad \qquad \qquad {\values{Carol}{Bob}{v_4}{v_4>v}.}\\
  \qquad \qquad \qquad \branchingM{Alice}{Bob}  \{ \truek \} \; \textsf{cont}:  \typevar\ENCan{v_1}, \\
  \qquad  \qquad  \qquad\ \  \qquad \qquad \qquad \{ \truek \} \; \textsf{finish}:  \values{Alice}{Bob}{v_5}{v_1 < v_5 < v_3 - 2}
\end{array}
\]
which extends the global assertion in Example~\ref{ex:3}.

%HS check
%
First, $\widehat\GA$ is inspected by history sensitivity and temporal satisfiability
checkers, such as the ones implemented in~\cite{LT10}.
If there are any HS problems, the $\methodoneE$ and $\methodtwoE$
algorithms are used, while $\liftPredE$ is used for TS problems.
This allows the architect to detect all the problems and
consider the ones for which (at least) one of the algorithms is
applicable.

We assume here that the architect focuses on HS problems first.
In $\widehat\GA$ there are two HS problems, both of
them can be solved automatically, and the methodology will return that
\begin{enumerate}
\item At line 4, $v_1$ is not known by $\ptp{Carol}$; the problem is solvable by either
  \begin{itemize}
  \item replacing $v_3 > v_1$ by $v_3 > v_2$ (algorithm $\methodoneE$)
    at line 4, or
  \item by revealing $v_1$ to $\ptp{Carol}$ (algorithm $\methodtwoE$);
    in this case, line 3 becomes
    \[ \values{Bob}{Carol}{v_2 \; u_1}{v_2>v_1 \land u_1 = v_1} \]
    and the assertion at line 4 becomes $v_3 > u_1$.
  \end{itemize}
\item At line 5, $v$ is not known by $\ptp{Carol}$; the problem is
  solvable by revealing the value of $v$ to $\ptp{Carol}$ (algorithm
  $\methodtwoE$) in which case line 3 becomes
  \[\values{Bob}{Carol}{v_2 \; u_2}{v_2>v_1 \land u_2
    = v}\]
  and the assertion at line 5 becomes $v_4 > u_2$.
\end{enumerate}
%For each solvable problem, the applicable procedures and their
%consequences are given.  For instance, in the case of \emph{weakening}
%(i.e. $\Phi_1$) each updated predicate is presented to the
%architects.
In the \emph{propagation case} (i.e., $\methodtwoE$), the methodology
gives the architect information on which participants the value of a
variable may be disclosed to.
Indeed, as discussed in Remark~\ref{rmk:secrecy}, it may not be
appropriate to use the suggested solution.
Therefore, the actual adoption of the proposed solutions should be
left to the architect.
In addition, the order in which problems are tacked is also left to
the architect (e.g., the same variable may be involved in several problems and
solving one of them may automatically fix the others).
%
%After each problem has been solved (automatically or manually), the list of problems is refreshed, since solving one problem might solve others and/or make other algorithms applicable.
%
%
Assuming that $\methodoneE$ is used to solve the first problem and
$\methodtwoE$ to solve the second, the first five lines of the new
global assertion are those in Example~\ref{ex:5} and HS is fixed.
%
%providing them with detailed information on the proposed modifications.  At each step, architects may choose whether a solution is acceptable  or not.
%

% TS check
%
Now HS is satisfied in $\widehat\GA$, but TS problems are
still there.
In case a TS problem cannot be solved automatically, additional
information can be returned: $(a)$ at which node the problem occurred,
$(b)$ which variables or recursion parameters are posing problems (i.e.\ using
$\partitionE$ and $\buildE$), and $(c)$ where liftings are not possible
(i.e. when $\buildE$ fails to add a satisfiable predicate to a node).
For $\widehat\GA$ there are two TS problems which are dealt with
sequentially. The methodology would report that
\begin{enumerate}
\item At line 6, $v_1$ does not satify the invariant $v > 0$. This
  can be solved by lifting $v_1 > 0$ (i.e.\ the invariant where $v$ is replaced
  by the actual parameter $v_1$) to the interaction at line 2,
  which would yield the new assertion $v \geq v_1 \land v_1 > 0$.
\item At line 7, there might be no value for $v_5$ such that $v_1 <
  v_5 < v_3 - 2$. The assertion is \emph{in conflict} (cf.
  Def.~\ref{def:conflict}) with the previous predicates; this problem
  cannot be solved since lifting would add the following predicates in
  line 2 and 4, respectively.
  \begin{itemize}
  \item $\exists v_3, v_5 . v_1 < v_5 < v_3 - 2$ which is indeed
    satisfiable, but remarkably does not constraint $v_1$ more than
    the initial predicate.
  \item $\forall v_1 . \exists v_5 . v_1 < v_5 < v_3 - 2$ which is not satisfiable, therefore the algorithm fails.
  \end{itemize}
\end{enumerate}
The failure of $\liftPredE$ is due to the fact that $v_5$ is
constrained by $v_1$ and $v_3$ which are fixed by two different
participants.
They would have to somehow interact in order to guarantee that there
exists a value for $v_5$, this cannot be done automatically.
Notice that in this case the methodology tells the architect that
$v_5$, fixed by $\ptp{Alice}$, is constrained by $v_1$ and $v_3$ which
are fixed by $\ptp{Alice}$ and $\ptp{Carol}$, respectively.
Our methodology can also suggest that the node introducing $v_3,$ or
(the part of) the assertion over $v_3$ may be the source of the problem
since $v_3$ is the only variable not known by $\ptp{Alice}$.

\begin{REM}\label{rmk:interferece}
  The application of an algorithm could compromise the application of
  another one due to some ``interference'' effect that may arise.
  For instance, applying strengthening ($\methodoneE$) could spoil the application of
  lifting ($\liftPredE$) and vice versa (cf. \S~\ref{sec:conc} for an intuitive
  explanation).
\end{REM}

%%% Local Variables: 
%%% mode: latex
%%% TeX-master: "main"
%%% End: 
% LocalWords:  choreographies

\section{Conclusions}\label{sec:conc}

In this paper, we investigated the problem of designing consistent
assertions. We focused on two consistency criteria from~\cite{bhty10}:
history sensitivity and temporal satisfiability. We proposed and
compared three algorithms ($\Phi_1$, $\Phi_2$, and $\Phi_3$) to amend
global assertions. Since each algorithm is applicable only in certain
circumstances, we proposed a methodology that supports
the architect when violations are not automatically amendable.

On the theoretical side, the algorithms $\Phi_1$, $\Phi_2$, and
$\Phi_3$ address the general problem of guaranteeing the
satisfiability of predicates when: (1) the parts of the system have a
different perspective/knowledge of the available information (in the
case of history sensitivity), and (2) the constraints are introduced
progressively (in the case of temporal satisfiability). The proposed
solutions can be adapted and used, for instance, to amend processes
(rather than types), orchestrations (rather than choreographies, when
we want to check for local constraints), e.g., expressed in formalisms
as CC-Pi~\cite{BuscemiMontanari}, a language for distributed processes
with constraints.
%
%Interestingly, temporal satisfiability is similar to the feasibility
%property in~\cite{Apt:1987:AFD:41625.41642} that the authors
%of~\cite{Apt:1987:AFD:41625.41642} describe as \emph{``a scheduler
%  from 'painting itself into a corner' with no possible
%  continuation''} and basically requiring that any initial segment of
%a computation must be possibly extended to a full computation to
%prevent.
Interestingly, temporal satisfiability is similar to the feasibility
property in~\cite{Apt:1987:AFD:41625.41642} requiring that any initial
segment of a computation must be possibly extended to a full computation
to prevent ``a scheduler from
`painting itself into a corner' with no possible continuation''.
A promising future development is to investigate more general
accounts of satisfiability which is applicable to different
application scenarios.

In scope of future work, we will study the ``interference'' issues of
the three algorithms (see Remark~\ref{rmk:interferece}) so to refine
our methodology and use them more effectively.
We conjecture, for instance, that conflicts between $\methodoneE$ and
$\liftPredE$ appear \emph{only} when the variable introduced where an
HS problem is solved by $\methodoneE$ is also involved in a TS
problem. % (cf.\ example in the forum).
More precisely, let $v$ be introduced at a node $n$ having an HS
problem.
If $\methodoneE$ is used to solved such problem the constraint at $n$
will be strengthened.
Now, if a node $n'$ --further down than $n$ in the tree-- has a TS
problem with a conflict involving $v$, the predicate at $n$ will be
updated (i.e.\ strengthened) by $\liftPredE$.
Therefore, the predicate at $n$ would be strengthened by each algorithm in an
independent way.
This may render the predicate at $n$ unsatisfiable.

We will also study the applicability of our methodology in more
realistic cases in order to assess the quality of the solutions offered by our
algorithms.

We plan to implement our algorithms and support for the methodology
by integrating it in the tool introduced in~\cite{LT10}.

%
%
%
%
% Future Work ( How to add this in the implementation)
%
% Related Work
%
% Conclusions
%
%

%%% Local Variables: 
%%% mode: latex
%%% TeX-master: "main"
%%% End: 

{
\bibliographystyle{eptcs}
\bibliography{session,socrw}
}

\end{document}

%%% Local Variables: 
%%% mode: latex
%%% TeX-master: "main"
%%% End: 